\DeclarePairedDelimiter\p{\lparen}{\rparen}
\let\oldp\p
\def\p{\@ifstar{\oldp}{\oldp*}}
\DeclarePairedDelimiter\bracket{\lbrack}{\rbrack}
\let\oldbracket\bracket
\def\bracket{\@ifstar{\oldbracket}{\oldbracket*}}
\DeclarePairedDelimiter\set\{\}
\let\oldset\set
\def\set{\@ifstar{\oldset}{\oldset*}}
\DeclarePairedDelimiter\abs{\lvert}{\rvert}
\let\oldabs\abs
\def\abs{\@ifstar{\oldabs}{\oldabs*}}
\DeclarePairedDelimiter\norm{\lVert}{\rVert}
\let\oldnorm\norm
\def\norm{\@ifstar{\oldnorm}{\oldnorm*}}
\DeclarePairedDelimiter{\ib}{\llbracket}{\rrbracket}
\let\oldib\ib
\def\ib{\@ifstar{\oldib}{\oldib*}}
\newcommand\numberthis{\addtocounter{equation}{1}\tag{\theequation}}
\newcommand{\dd}{\mathrm{d}}
\newcommand{\cA}{\mathcal{A}}
\newcommand{\cB}{\mathcal{B}}
\newcommand{\cF}{\mathcal{F}}
\newcommand{\cG}{\mathcal{G}}
\newcommand{\cM}{\mathcal{M}}
\newcommand{\cN}{\mathcal{N}}
\newcommand{\cO}{\mathcal{O}}
\newcommand{\cP}{\mathcal{P}}
\newcommand{\cX}{\mathcal{X}}
\newcommand{\cY}{\mathcal{Y}}
\newcommand{\cZ}{\mathcal{Z}}
\newcommand{\EE}{\mathbb{E}}
\newcommand{\NN}{\mathbb{N}}
\newcommand{\RR}{\mathbb{R}}
\newcommand{\bS}{\mathbb{S}}
\newcommand{\rU}{\mathrm{U}}
\newcommand{\sH}{\mathsf{H}}
\newcommand{\sI}{\mathsf{I}}
\theoremstyle{plain} 
\newtheorem{theorem}{Theorem}
\newtheorem{lemma}{Lemma}
\newtheorem{corollary}{Corollary}
\newtheorem{definition}{Definition}
\newtheorem{proposition}{Proposition}
\newtheorem{remark}{Remark}
\newtheorem{example}{Example}
\DeclareMathOperator{\supp}{supp}
\DeclareMathOperator*{\argmax}{\arg\,\max}
\newcommand{\s}[1]{\mathsf{#1}}
\DeclareMathOperator{\dkl}{\s{D}_{\s{KL}}}
\title{Sliced Mutual Information:\\A Scalable Measure of Statistical Dependence}
\author{%
  Ziv Goldfeld\\
  Cornell University\\
  \texttt{goldfeld@cornell.edu} \\
  \And
  Kristjan Greenewald\\
  MIT-IBM Watson AI Lab\\
  \texttt{kristjan.h.greenewald@ibm.com} 
}
\begin{document}

\maketitle
\allowdisplaybreaks

\begin{abstract}
   Mutual information (MI) is a fundamental measure of statistical dependence, with a myriad of applications to information theory, statistics, and machine learning. While it possesses many desirable structural properties, the estimation of high-dimensional MI from samples suffers from the curse of dimensionality. Motivated by statistical scalability to high dimensions, this paper proposes \emph{sliced} MI (SMI) as a surrogate measure of dependence. SMI is defined as an average of MI terms between one-dimensional random projections. We show that it preserves many of the structural properties of classic MI, while gaining scalable computation and efficient estimation from samples. Furthermore, and in contrast to classic MI, SMI can grow as a result of deterministic transformations. This enables leveraging SMI for feature extraction by optimizing it over processing functions of raw data to identify useful representations thereof. 
   Our theory is supported by numerical studies of independence testing and feature extraction, which demonstrate the potential gains SMI offers over classic MI for high-dimensional inference.
\end{abstract}

\section{Introduction}

Mutual information (MI) is a universal measure of dependence between random variables, defined as
\begin{equation}
    \s{I}(X;Y):=\int_{\cX\times \cY}\log\left(\frac{\dd P_{X,Y}}{\dd P_X\otimes P_Y}\right)\dd P_{X,Y},\label{EQ:MI}
\end{equation}
where $(X,Y)\sim P_{X,Y}$ and $\dd P/\dd Q$ is the Radon-Nikodym derivative of $P$ with respect to (w.r.t.) $Q$. It possesses many desirable properties, such as meaningful units (bits or nats), nullification if and only if (iff) $X$ and $Y$ are independent, convenient variational forms, and invariance to bijections. In fact, MI can be obtained axiomatically as a unique (up to a multiplicative constant) functional satisfying several natural `informativeness' conditions~\cite{CovThom06}.
As such, it found a variety of applications in communications, information theory, and statistics \cite{CovThom06,ElGamal2011}. More recently, it was adopted as a figure of merit for studying \cite{haussler1994bounds,DNNs_Tishby2017,achille2018information,gabrie2018entropy,ICML_Info_flow2019,goldfeld2020information} and designing \cite{battiti1994using,viola1997alignment,alemi2017deep,higgins2017beta,oord2018representation} machine learning models.

MI is a functional of the joint distribution $P_{X,Y}$ of the considered random variables. In practice, this distribution is often not known and only samples from it are available, thereby necessitating estimation of $\s{I}(X;Y)$. While this topic has received considerable attention \cite{paninski2004estimating,kraskov2004estimating,Hero_EDGE2018,belghazi2018mutual,Goldfeld2020convergence},~MI is fundamentally hard to estimate in high-dimensional settings due to an exponentially large (in dimension) sample complexity \cite{Paninski2003}. Motivated by statistical efficiency in high dimensions and inspired by recent slicing techniques of statistical divergences \cite{rabin2011wasserstein,vayer2019sliced,lin2021projection,nadjahi2020statistical}, this paper introduces \emph{sliced} MI (SMI) as a surrogate notion of informativeness. We show that SMI inherits many of the properties of classic MI, while allowing for efficient estimation. Furthermore, in certain aspects, SMI is more compatible with modern machine learning practice than classic MI. In particular, deterministic transformations of the random variables can increase SMI, e.g., if the resulting codomains have more informative slices (in classic MI sense ). This enables using SMI as a figure of merit for feature extraction by identifying transformation (e.g., NN parameters) that maximize it.

\subsection{Contributions}

SMI is defined as the average of MI terms between one-dimensional random projections. Namely, if $\bS^{d-1}$ denotes the $d$-dimensional sphere (whose surface area is designated by $S_{d-1}$), we define
\begin{equation}
    \s{SI}(X;Y):=\frac{1}{S_{d_x-1}S_{d_y-1}}\oint_{\bS^{d_x-1}}\oint_{\bS^{d_y-1}}\s{I}(\theta^\intercal X;\phi^\intercal Y)\dd\theta\dd\phi.\label{EQ:SMI_intro}
\end{equation}
We may similarly define a max-sliced version by maximizing over projection directions, as opposed to averaging over them. Despite the projection of $X$ and $Y$ to a single dimension, SMI preserves many of the properties of MI. For instance, we show that SMI nullifies iff random variables are independent, it satisfies a chain rule, can be represented as a reduction in sliced entropy, admits a variational form, etc. Further, SMI between (jointly) Gaussian variables is tightly related to their canonical correlation coefficient \cite{hotelling1936relations}. This is in direct analogy to the relation between classic MI of Gaussian variables $(X,Y)$ and their correlation coefficient, where $\s{I}(X;Y)=-\frac{1}{2}\log\big(1-\rho^2(X,Y)\big)$.

SMI is well-positioned for statistical estimation in high-dimensional setups, where one estimates $\s{SI}(X;Y)$ from $n$ i.i.d. samples of $P_{X,Y}$. While the error of standard MI (or entropy) estimates, e.g., those in \cite{Hero_EDGE2018,jiao2018nearest,han2017optimal}, scales as $n^{-1/d}$ when $d$ is large, the same estimators admit statistical efficiency when $d=1,2$, converging at (near) parametric rates. Combining such estimators with Monte-Carlo (MC) sampling to approximate the integral over the unit sphere, we prove that the overall error scales (up to log factors) as $m^{-1/2}+n^{-1/2}$, where $m$ is the number of MC samples and $n$ is the size of the high-dimensional dataset. We validate our theory on synthetic experiments, demonstrating that SMI is a scalable alternative to classic MI when dealing with high-dimensional data.

A notable contrast between classic and sliced MI involves the data processing inequality (DPI). Classic MI cannot grow as a result of processing the involved variables, namely, $\s{I}(X;Y)\geq \s{I}\big(f(X);Y\big)$ for any deterministic function $f$. This is since MI encodes arbitrarily fine details about $(X,Y)$ as variables in the ambient space, and transforming them cannot reveal anything that was not already there. SMI, on the other hand, only considers one-dimensional projections of $X$ and $Y$, some of which can be more correlated than others. Consequently, SMI can grow as a result of deterministic transformations, i.e., $\s{SI}(X;Y)<\s{SI}\big(f(X);Y\big)$ is possible if \emph{projections} of $f(X)$ are more informative about \emph{projections} of $Y$ than those of $X$ itself. We show theoretically and demonstrate empirically that SMI is increased by projecting the data on more informative directions, highlighting its compatibility with feature extraction tasks.

\section{Preliminaries and Background}

We take $\cP(\RR^d)$ as the class of all Borel probability measures on $\RR^d$. Elements of $\cP(\RR^d)$ are denoted by uppercase letters, with subscripts to indicate the associated random variables, e.g., $P_X$ or $P_{X,Y}$. The support of $P_X$ is $\supp(P_X)$. Our focus throughout is on absolutely continuous random variables; we use lowercase letters, such as $p_{X}$ or $p_{X,Y}$, to denote probability density functions (PDFs). For a function $f:\RR^d\to\RR^{d'}$ and a distribution $P_X\in\cP(\RR^d)$, we write $f_\sharp P_X$ for the pushforward measure of $P_X$ through $f$, i.e., $f_\sharp P_X(A)=P_X\big(f^{-1}(A)\big)$. 
The $L^p(\RR^d)$ norm of $f$ is denoted by $\|f\|_{p,d}$. The $d$-dimensional unit sphere is $\bS^{d-1}$, and its surface~area is $S_{d-1}=2\pi^{d/2}/\Gamma(d/2)$, with $\Gamma$ as the gamma function. We also define slicing along $\theta$ as $\pi^{\theta}(x):=\theta^\intercal x$. 

\paragraph{Mutual information and entropy.} 
Information measures, such as MI and entropy, are ubiquitous in information theory and machine learning. MI is defined in \eqref{EQ:MI} and can be equivalently written in terms of the Kullback-Leibler (KL) divergence as $\s{I}(X;Y):=\dkl\big(P_{X,Y}\big\|P_X\otimes P_Y\big)$. $\s{I}(X;Y)$ thus quantifies how far, in the KL sense, $(X,Y)\sim P_{X,Y}$ are from being independent. The differential entropy of a continuous random variable $X$ with density $p_X$ is $\s{H}(X):=-\mathbb{E}\big[\log\big(p_X(X)\big)\big]$, 
quantifying a notion of uncertainty associated with $X$. For a pair $(X,Y)\sim P_{X,Y}$, the conditional entropy of $X$ given $Y$ is $H(X|Y):=\int_{\cY} \s{H}(X|Y=y)\dd P_Y(y)$, where $\s{H}(X|Y=y)$ is computed w.r.t. $P_{X|Y=y}$. 
Conditional MI is similarly defined as $\s{I}(X;Y|Z):=\int_{\cZ} \s{I}(X;Y|Z=z)\dd P_Z(z)$. With these definitions, one can represent MI as\footnote{Assuming that the appropriate PDFs exist.} $\s{I}(X;Y)=\s{H}(X)-\s{H}(X|Y)=\s{H}(Y)-\s{H}(Y|X)$, thus interpreting MI as the reduction in the uncertainty regarding one variable as a result of observing the other. Another useful decomposition is the MI chain rule
$\s{I}(X,Y;Z)=\s{I}(X;Z)+\s{I}(Y;Z|X)$.

\paragraph{Data processing inequality.} The DPI states that $\s{I}(X;Y)\geq \s{I}(X;Z)$, if $X \xleftrightarrow[]{} Y\xleftrightarrow[]{} Z$ forms~a Markov chain. This inequality is a cornerstone for many information theoretic derivations and~is~natural when there are no computational restrictions on the model. 
However, given a restricted computational budget, processing the input may aid inference. Deep neural network classifiers are an excellent example: they generate a hierarchy of processed representations of the input that are increasingly useful (although not more informative in the Shannon MI sense) for inferring the label. The incompatibility between the DPI and~deep learning practice was previously observed in \cite{xu2019theory}, motivating their definition of a computationally restricted MI variant that can be grow from processing. As we show in Section \ref{SUBSEC:DMI}, SMI also shares this~property.

\section{Sliced Mutual Information}

Our goal is to define a surrogate notion of MI that is more scalable for computation and estimation from samples in high dimensions. We propose SMI as defined next.

\begin{definition}[Sliced MI]\label{DEF:SMI}
Fix $(X,Y)\sim P_{X,Y}\in \cP(\RR^{d_x}\times\RR^{d_y})$. Let $\Theta\sim\s{Unif}(\bS^{d_x-1})$ and $\Phi\sim\s{Unif}(\bS^{d_y-1})$ be independent of each other and of $(X,Y)$. The SMI between $X$ and $Y$ is
\begin{equation}
    \s{SI}(X;Y):=\s{I}(\Theta^\intercal X;\Phi^\intercal Y|\Theta,\Phi)=\frac{1}{S_{d_x-1}S_{d_y-1}}\oint_{\bS^{d_x-1}}\oint_{\bS^{d_y-1}}\s{I}(\theta^\intercal X;\phi^\intercal Y)\dd\theta\dd\phi.\label{EQ:SMI}
\end{equation}
\end{definition}
In words, the SMI between two high-dimensional variables is defined as an average of MI terms between their one-dimensional projections. By the DPI, $\s{SI}(X;Y)\leq \s{I}(X;Y)$ so we inherently introduce information loss. Nevertheless, we will show that SMI inherits key properties of MI such as discrimination between dependence and independence, chain rule, entropy decomposition, etc. 

\begin{remark}[One-dimensional variables]\label{rem:1d}
If $d_x=1$ then $\bS^{0}=\{\pm1\}$ and we have $\s{SI}(X;Y)=\s{I}(X;\Phi^\intercal Y|\Phi)$, which
follows by invariance of MI to bijections and the independence of $(X,Y,\Phi)$ and $\Theta$. Similarly, when $d_x=d_y=1$ we have $\s{SI}(X;Y)=\s{I}(X;Y)$. 
\end{remark}

\begin{remark}[Single projection direction]
When slicing statistical divergences, like the Wasserstein distance \cite{rabin2011wasserstein}, one typically considers a single slicing direction. Namely, given that $X$ and $Y$ are of the same dimension $d$, they are both projected onto the same $\theta\in\bS^{d-1}$ and the distances between $\theta^\intercal X$ and $\theta^\intercal Y$ are then averaged over the unit sphere. While this approach is possible also in the context of SMI, we chose to define it using two slicing directions, $\theta$ and $\phi$, for several reasons:
\begin{enumerate}[leftmargin=.5cm]
    \item this definition is invariant to rotations of the spaces in which $X$ and $Y$ take values---with a single direction, rotating either space would change the SMI, seemingly an undesirable property for an information measure;
    \item it gives rise to a crucial property of SMI, that $\mathsf{SI}(X;Y)=0\iff X$ and $Y$ are independent (see Proposition \ref{PROP:SMI_prop}), which does not hold with a single slicing direction;\footnote{Indeed, let $X_1,X_2\sim \mathcal{N}(0,1)$ be independent, set $X=(X_1, X_2)^\intercal$ and $Y=(X_2, -X_1)^\intercal$. As 2-dimensional vectors, $X$ and $Y$ are dependent, but one readily verifies that $\mathsf{cov}(\theta^\intercal X,\theta^\intercal Y)=0$, for any $\theta\in\bS^1$. This implies independence of $\theta^\intercal X$ and $\theta^\intercal Y$, hence the single slicing direction SMI would nullify in this case.}
    \item it fares naturally with variables of different dimensions (although one can use zero padding to circumvent this issue in the single-direction version); and
    \item it is inspired by the canonical correlation coefficient \cite{hotelling1936relations}, that also uses two projection directions.
\end{enumerate}
\end{remark}

To later establish a chain rule and entropy-based decompositions, we define SMI between more than two random variables, conditional SMI, and sliced entropy.

\begin{definition}[Joint and conditional SMI]\label{def:joint}
Let $(X,Y,Z)\sim P_{X,Y,Z}\in \cP(\RR^{d_x}\times\RR^{d_y}\times\RR^{d_z})$ and take 
$\Theta\sim\s{Unif}(\bS^{d_x-1})$, $\Phi\sim\s{Unif}(\bS^{d_y-1})$, and $\Psi\sim\s{Unif}(\bS^{d_z-1})$ mutually independent. The SMI between $(X,Y)$ and $Z$ is defined as
\begin{subequations}
\begin{equation}
    \s{SI}(X,Y;Z):=\s{I}(\Theta^\intercal X,\Phi^\intercal Y;\Psi^\intercal Z|\Theta,\Phi,\Psi).\label{EQ:joint_SMI}
\end{equation}
The conditional SMI between $X$ and $Y$ given $Z$ is
\begin{equation}
    \s{SI}(X;Y|Z):=\s{I}(\Theta^\intercal X;\Phi^\intercal Y|\Theta,\Phi,\Psi,\Psi^\intercal Z).
    \label{EQ:conditional_SMI}
\end{equation}
\end{subequations}
\end{definition}
The expression in \eqref{EQ:joint_SMI} extends Definition \ref{DEF:SMI}. 
Conditional SMI is the sliced information given access to another projected random variable along with its projection direction. 
Accordingly, conditional SMI is in the spirit of the original definition of $\s{SI}(X;Y)$, incorporating only projected data without introducing additional uncertainty about the direction. 

\begin{remark}[Extensions]
Joint and conditional SMI have natural multivariate extensions. For example,
$\s{SI}(X_1,\ldots,X_n;Y_1,\ldots,Y_m):=\s{I}(\Theta_1^\intercal X_1,\ldots,\Theta_n^\intercal X_n;\Phi_1^\intercal Y_1,\ldots,\Phi_m^\intercal Y_m|\Theta,\Phi)$, where $\Theta=(\Theta_1\ldots \Theta_n)$ and $\Phi=(\Phi_1\ldots \Phi_m)$. 
The extension of conditional SMI is~similar. 
\end{remark}

\begin{definition}[Sliced entropy]\label{DEF:SH}
Let $(X,Y)\sim P_{X,Y}\in \cP(\RR^{d_x}\times\RR^{d_y})$ and take 
$\Theta\sim\s{Unif}(\bS^{d_x-1})$ and $\Phi\sim\s{Unif}(\bS^{d_y-1})$ to be independent. The sliced entropy of $X$ is
$\s{SH}(X):=\s{H}(\Theta^\intercal X|\Theta)$, while the conditional sliced entropy of $X$ given $Y$ is $\s{SH}(X|Y):=\s{H}(\Theta^\intercal X|\Theta, \Phi, \Phi^\intercal Y)$.
\end{definition}
Sliced entropy is interpreted as the average uncertainty in one-dimensional projections of the considered random variable. Conditional sliced entropy is the remaining uncertainty when a projected version and the projection direction of another random variable is revealed.

The following proposition shows that SMI retains many of the properties of classic MI. 

\begin{proposition}[SMI properties]\label{PROP:SMI_prop}
The following properties hold:
\begin{enumerate}[leftmargin=1cm]
    \item\label{prpty:NonNeg} \textbf{Non-negativity:} $\s{SI}(X;Y)\geq 0$ with equality iff $X$ and $Y$ are independent. 
    \item\label{prpty:Bounds} \textbf{Bounds:} $\inf_{\substack{\theta\in \bS^{d_x-1}\\ \phi\in \bS^{d_y - 1}}} \s{I}(\theta^\intercal  X; \phi^\intercal  Y)\leq \s{SI}(X;Y) \leq \sup_{\substack{\theta\in \bS^{d_x-1}\\ \phi\in \bS^{d_y - 1}}} \s{I}(\theta^\intercal  X; \phi^\intercal  Y) \leq \s{I}(X;Y)$.
    \item\label{prpty:KL} \textbf{KL divergence:} We have
    $\s{SI}(X;Y)=\EE_{\Theta,\Phi}\Big[\s{D}_{\s{KL}}\big((\pi^\Theta,\pi^\Phi)_\sharp P_{X,Y}\big\|\pi^\Theta_\sharp P_X\otimes \pi^\Phi_\sharp P_Y\big)\Big]$.
    \item\label{prpty:Chain} \textbf{Chain rule:} For any random variables $X_1,\ldots,X_n,Y,Z$, we have the decomposition $\s{SI}(X_1,\ldots,X_n;Y)=\s{SI}(X_1;Y)+\sum_{i=2}^n\s{SI}(X_i;Y|X_1,\ldots,X_{i-1})$.
    In particular, $\s{SI}(X,Y;Z)=\s{SI}(X;Z)+\s{SI}(Y;Z|X)$.
    \item\label{prpty:Tensor} \textbf{Tensorization:} Suppose that $(X_1, Y_1),\ldots,(X_n, Y_n)$ are mutually independent. Then $\s{SI}(X_1,\dots, X_n; Y_1,\dots, Y_n) = \sum_{i=1}^n \s{SI}(X_i; Y_i)$.
\end{enumerate}
\end{proposition}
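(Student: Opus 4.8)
The plan is to handle the five properties in increasing order of difficulty, reducing each to a statement about classic MI applied pointwise at fixed projection directions and then averaging over the sphere. For Property~\ref{prpty:NonNeg}, since $\s{SI}(X;Y)=\s{I}(\Theta^\intercal X;\Phi^\intercal Y\mid\Theta,\Phi)$ is a conditional MI, non-negativity is immediate. For the equality case, if $X$ and $Y$ are independent then $\theta^\intercal X$ and $\phi^\intercal Y$ are independent for every $(\theta,\phi)$, so each integrand $\s{I}(\theta^\intercal X;\phi^\intercal Y)$ vanishes and the average is zero. The substantive direction is the converse: if $\s{SI}(X;Y)=0$, then since the integrand is non-negative it must vanish for Lebesgue-a.e. $(\theta,\phi)\in\bS^{d_x-1}\times\bS^{d_y-1}$, and by the equality case of classic MI this yields that $\theta^\intercal X$ and $\phi^\intercal Y$ are independent for a.e. $(\theta,\phi)$. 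I would then translate this into a factorization of characteristic functions: independence of $\theta^\intercal X$ and $\phi^\intercal Y$ says $\EE\big[e^{i(a\theta^\intercal X+b\phi^\intercal Y)}\big]=\EE\big[e^{ia\theta^\intercal X}\big]\,\EE\big[e^{ib\phi^\intercal Y}\big]$ for all $a,b\in\RR$, i.e. the joint characteristic function of $(X,Y)$ factorizes along the ray directions $(\theta,\phi)$.

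\textbf{The main obstacle} lies precisely here: upgrading an \emph{almost-everywhere} factorization over directions to a factorization at \emph{every} $(s,t)\in\RR^{d_x}\times\RR^{d_y}$. The key observation is that characteristic functions are jointly continuous, so for fixed $(a,b)$ both sides of the identity above are continuous in $(\theta,\phi)$; since a full-measure subset of the sphere is dense, the identity extends from the a.e. set to all directions by continuity. Writing an arbitrary nonzero $(s,t)$ as $(a\theta,b\phi)$ with $\theta=s/\|s\|_{2,d_x}$ and $\phi=t/\|t\|_{2,d_y}$ (the cases $s=0$ or $t=0$ being trivial) then gives $\EE\big[e^{i(s^\intercal X+t^\intercal Y)}\big]=\EE\big[e^{is^\intercal X}\big]\,\EE\big[e^{it^\intercal Y}\big]$ for all $(s,t)$. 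Uniqueness of characteristic functions forces $P_{X,Y}=P_X\otimes P_Y$, i.e.\ $X$ and $Y$ are independent.

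The remaining properties reduce to pointwise facts about classic MI. For Property~\ref{prpty:Bounds}, the two inner inequalities are the trivial statement that an average lies between the infimum and supremum of the integrand, while $\sup_{\theta,\phi}\s{I}(\theta^\intercal X;\phi^\intercal Y)\le\s{I}(X;Y)$ follows from the DPI along the Markov chain $\theta^\intercal X\xleftrightarrow{}X\xleftrightarrow{}Y\xleftrightarrow{}\phi^\intercal Y$. Property~\ref{prpty:KL} is essentially definitional: for fixed $(\theta,\phi)$ the laws of $\theta^\intercal X$, $\phi^\intercal Y$, and $(\theta^\intercal X,\phi^\intercal Y)$ are exactly the pushforwards $\pi^\theta_\sharp P_X$, $\pi^\phi_\sharp P_Y$, and $(\pi^\theta,\pi^\phi)_\sharp P_{X,Y}$, so rewriting $\s{I}(\theta^\intercal X;\phi^\intercal Y)$ as a KL divergence and taking the expectation over $(\Theta,\Phi)$ gives the claim. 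For the chain rule (Property~\ref{prpty:Chain}) I would condition on all projection directions $(\Theta_1,\dots,\Theta_n,\Phi)$ and apply the classic MI chain rule to the projected variables $\Theta_i^\intercal X_i$ and $\Phi^\intercal Y$, then identify each resulting term as the corresponding (conditional) SMI per Definition~\ref{def:joint}; the only bookkeeping is to check that conditioning on $\{(\Theta_j^\intercal X_j,\Theta_j)\}_{j<i}$ matches the definition of conditional SMI. Finally, for tensorization (Property~\ref{prpty:Tensor}), mutual independence of the pairs $(X_i,Y_i)$ forces mutual independence of the projected pairs $(\Theta_i^\intercal X_i,\Phi_i^\intercal Y_i)$ across $i$ for every fixed choice of directions; the classic MI tensorization identity then splits the pointwise MI as $\sum_i\s{I}(\Theta_i^\intercal X_i;\Phi_i^\intercal Y_i)$, and integrating over the independent directions yields $\sum_i\s{SI}(X_i;Y_i)$.
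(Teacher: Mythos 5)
Your proposal is correct and follows essentially the same route as the paper's proof for all five properties: non-negativity via conditional MI, characteristic-function factorization for the equality case, averaging plus DPI for the bounds, the definitional KL identification, and the classic chain rule/tensorization applied conditionally on the projection directions. The only difference is that you explicitly patch the almost-everywhere-to-everywhere upgrade in Property~\ref{prpty:NonNeg} via continuity of characteristic functions and density of full-measure subsets of the sphere, a step the paper's proof passes over silently; your version is the more careful one.
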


The proof of Proposition \ref{PROP:SMI_prop} is given in Appendix \ref{SUBSEC:SMI_prop_proof}. 

\begin{remark}[SMI versus MI]
Proposition \ref{PROP:SMI_prop} shows that SMI inherits many of the favorable properties of classic MI. Nevertheless, we stress that SMI is posed as a new measure of dependence that (although closely related) is different from MI. In particular, the gap between MI and SMI may not be bounded.\footnote{See the Example from the beginning of Section \ref{SUBSEC:DMI} and note that under that setup, for any $0<a<\infty$, we have $\mathsf{I}\big(g_a(X);Y\big)=\infty$ while $\mathsf{SI}\big(g_a(X);Y\big)$ is finite.} 
SMI thus should not be treated as a proxy of MI, but rather as an alternative figure of merit. The premise of the SMI framework is that its meaningful structure does not translate into computational or statistical inefficiency. Indeed, Section  \ref{SUBSEC:estimation} shows that SMI can be efficiently estimated with parametric rate (up to logarithmic factors).
\end{remark}

Similarly to MI, the sliced version simplifies when the variables are jointly Gaussian. 

\begin{example}[Gaussian SMI]\label{PROP:SMI_Gaussian}
If $X\sim\cN(0,\Sigma_X)$ and $Y\sim\cN(0,\Sigma_Y)$ are jointly Gaussian with cross-covariance $\Sigma_{XY}$, then 
    \[\s{SI}(X;Y) = \frac{1}{2 S^{d_x-1} S^{d_y - 1}} \oint_{\bS^{d_x-1}} \oint_{\bS^{d_y-1}} \log \left(\frac{1}{1-\rho^{\mspace{3mu}2}(\theta^\intercal X,\phi^\intercal Y)}\right) \dd\theta \dd\phi,\]
    where $\rho(\theta^\intercal X,\phi^\intercal Y):=\frac{\theta^\intercal \Sigma_{X,Y} \phi}{(\theta^\intercal \Sigma_X \theta\phi^\intercal \Sigma_Y \phi)^{1/2}}$ is the correlation coefficient of $\theta^\intercal X$ and $\phi^\intercal Y$. Denoting by $\rho_\s{CCA}(X,Y):=\sup_{(\theta,\phi)\in\bS^{d_x-1}\times\bS^{d_y-1}}\rho(\theta^\intercal X,\phi^\intercal Y)$ the 
    \emph{canonical} correlation coefficient, we get $\s{SI}(X;Y)\leq -0.5\log\big(1-\rho^{\mspace{3mu}2}_\mathsf{CCA}(X,Y)\big)$. 
\end{example}

The Gaussian distribution is also special for sliced entropy, where, as for classic entropy, it maximizes $\s{SH}(X)$ under a fixed (mean and) covariance constraint.

\begin{proposition}[Gaussian maximizes sliced entropy]\label{PROP:SH_max}
Let $\cP_1(\mu,\Sigma):=\big\{P\in\cP(\RR^d):\,\supp(P)=\RR^d\,,\,\EE_P[X]=\mu\,,\,\EE\big[(X-\mu)(X-\mu)^\intercal \big]=\Sigma\big\}$. 
Then $\argmax_{P\in\cP_1(\mu,\Sigma)}\s{SH}(P) = \mathcal{N}(\mu,\Sigma)$, i.e., the normal distribution maximizes sliced entropy inside $\cP_1(\mu,\Sigma)$. 
\end{proposition}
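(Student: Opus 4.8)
The plan is to reduce the statement to the classical one-dimensional maximum-entropy principle by unrolling the definition of sliced entropy. By Definition \ref{DEF:SH}, since $\Theta\sim\s{Unif}(\bS^{d-1})$ is independent of $X$, we have
\[
\s{SH}(P)=\s{H}(\Theta^\intercal X|\Theta)=\frac{1}{S_{d-1}}\oint_{\bS^{d-1}}\s{H}(\theta^\intercal X)\,\dd\theta,
\]
so sliced entropy is the spherical average of the differential entropies of the one-dimensional projections $\theta^\intercal X$. For each fixed $\theta\in\bS^{d-1}$, the projection $\theta^\intercal X$ has mean $\theta^\intercal\mu$ and variance $\theta^\intercal\Sigma\theta$, and the full-support constraint forces $\Sigma\succ0$ so that $\theta^\intercal\Sigma\theta>0$ on the sphere. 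First I would invoke the standard fact that, among all one-dimensional laws with a prescribed variance $v>0$, the Gaussian uniquely maximizes differential entropy, giving $\s{H}(\theta^\intercal X)\le\tfrac12\log(2\pi e\,\theta^\intercal\Sigma\theta)$ with equality iff $\theta^\intercal X$ is Gaussian.

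Integrating this pointwise bound over the sphere yields
\[
\s{SH}(P)\le\frac{1}{2S_{d-1}}\oint_{\bS^{d-1}}\log\big(2\pi e\,\theta^\intercal\Sigma\theta\big)\,\dd\theta,
\]
where the right-hand side depends only on $\Sigma$ and is finite because $\theta\mapsto\theta^\intercal\Sigma\theta$ is bounded away from $0$ and $\infty$ on the compact sphere. When $X\sim\cN(\mu,\Sigma)$ each projection $\theta^\intercal X\sim\cN(\theta^\intercal\mu,\theta^\intercal\Sigma\theta)$ is Gaussian, so equality holds for every $\theta$ and the right-hand side equals $\s{SH}\big(\cN(\mu,\Sigma)\big)$. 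This already establishes $\s{SH}(P)\le\s{SH}\big(\cN(\mu,\Sigma)\big)$ for all $P\in\cP_1(\mu,\Sigma)$, i.e.\ the Gaussian is a maximizer.

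The remaining and more delicate point is uniqueness, which I expect to be the main obstacle. Suppose $P\in\cP_1(\mu,\Sigma)$ attains the maximum. Since the integrand $\tfrac12\log(2\pi e\,\theta^\intercal\Sigma\theta)-\s{H}(\theta^\intercal X)$ is nonnegative and integrates to zero, it must vanish for Lebesgue-almost every $\theta\in\bS^{d-1}$; by the equality condition of the one-dimensional principle this means $\theta^\intercal X\sim\cN(\theta^\intercal\mu,\theta^\intercal\Sigma\theta)$ for a.e.\ $\theta$. Consequently the characteristic function satisfies $\EE[e^{\mathrm{i}r\,\theta^\intercal X}]=\exp\big(\mathrm{i}r\,\theta^\intercal\mu-\tfrac12 r^2\,\theta^\intercal\Sigma\theta\big)$ for all $r\in\RR$ and a.e.\ $\theta$. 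Writing a general $t\in\RR^d$ as $t=r\theta$ and using that $\phi_X(t):=\EE[e^{\mathrm{i}t^\intercal X}]$ is continuous, I would extend this identity from the a.e.\ set of directions to all $t$, obtaining $\phi_X(t)=\exp(\mathrm{i}t^\intercal\mu-\tfrac12 t^\intercal\Sigma t)$ everywhere. By uniqueness of characteristic functions (a Cram\'er--Wold type argument) this forces $P=\cN(\mu,\Sigma)$, completing the identification of the argmax. The only technical care needed is to ensure the projected entropies are well defined and finite---using absolute continuity of $P$ together with the Gaussian upper bound---so that the a.e.\ equality conclusion is legitimate.
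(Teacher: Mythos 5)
Your proof is correct and its core is the same as the paper's: unroll $\s{SH}(P)$ as a spherical average of one-dimensional entropies, apply the scalar Gaussian maximum-entropy bound $\s{H}(\theta^\intercal X)\le\tfrac12\log(2\pi e\,\theta^\intercal\Sigma\theta)$ slice by slice, integrate, and observe that $\cN(\mu,\Sigma)$ attains the bound since its projections are exactly $\cN(\theta^\intercal\mu,\theta^\intercal\Sigma\theta)$. You go one step further than the paper, however: the paper's proof stops after exhibiting the Gaussian as \emph{a} maximizer, whereas you also prove it is the \emph{unique} one (as the $\argmax$ notation in the statement implies) by forcing equality in the pointwise bound for a.e.\ $\theta$, concluding that a.e.\ projection is Gaussian, and then upgrading to all of $\RR^d$ via continuity of the characteristic function and a Cram\'er--Wold argument. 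That uniqueness step is sound (absolute continuity of $P$ guarantees each projection has a density, so the scalar equality case applies) and is a genuine strengthening of what the paper actually writes down.
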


The proposition is proven in Appendix \ref{SUBSEC:SH_max_proof}, where two additional max-entropy claims are established. Specifically, we show that (i) the uniform distribution on the sphere maximizes sliced entropy over measures supported inside a ball; and (ii) the symmetric multivariate Laplace distribution \cite{kotz2012laplace} is the maximizer subject to mean constraints on the $d$-dimensional variable and its projections.  

Lastly, SMI admits a variational form in the spirit of the Donsker-Varadhan representation of MI.

\begin{proposition}[Variational form]\label{PROP:SMI_DV} Let $\Theta\sim\mathsf{Unif}(\bS^{d_x-1})$, $\Phi\sim\mathsf{Unif}(\bS^{d_y-1})$ be independent of each other and of $(X,Y)\sim P_{X,Y}\in\cP(\RR^{d_x}\times\RR^{d_y})$, and set $(\tilde{X},\tilde{Y})\sim P_X\otimes P_Y$. We have
\begin{equation*}
\s{SI}(X;Y)=\sup_{g:\,\bS^{d_x-1}\times\bS^{d_y-1}\times \RR^2\to\RR} \EE\big[g(\Theta,\Phi,\Theta^\intercal X,\Phi^\intercal Y)\big]-\log\left(\EE\left[e^{g(\Theta,\Phi,\Theta^\intercal\tilde{X},\Phi^\intercal\tilde{Y})}\right]\right),\label{EQ:SMI_DV}
\end{equation*}
where the supremum is over all measurable functions for which both expectations are finite. 
\end{proposition}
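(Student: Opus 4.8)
The plan is to reduce the claim to the classical Donsker--Varadhan (DV) variational formula for the KL divergence, applied conditionally on the projection directions, and then to reconcile it with the SMI-as-KL identity of Property~\ref{prpty:KL}. Throughout let $\sF(g)$ denote the right-hand functional, i.e.\ $\sF(g):=\E{g(\Theta,\Phi,\Theta^\intercal X,\Phi^\intercal Y)}-\log\E{e^{g(\Theta,\Phi,\Theta^\intercal\tilde X,\Phi^\intercal\tilde Y)}}$. The one genuine subtlety is that in the asserted formula the logarithm sits \emph{outside} the expectation over $(\Theta,\Phi)$, whereas a naive conditional application of DV would place it inside; bridging this gap is the crux of the argument.

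First I would prove the easy inequality $\sF(g)\le\s{SI}(X;Y)$ for every admissible $g$. Fix $g$ and condition on $\{\Theta=\theta,\Phi=\phi\}$; since $(\tilde X,\tilde Y)\sim P_X\otimes P_Y$ is independent of $(\Theta,\Phi)$, the DV lower bound applied to the measures $(\pi^\theta,\pi^\phi)_\sharp P_{X,Y}$ and $\pi^\theta_\sharp P_X\otimes\pi^\phi_\sharp P_Y$ gives
\[
\E{g(\theta,\phi,\theta^\intercal X,\phi^\intercal Y)\mid\Theta=\theta,\Phi=\phi}-\log\E{e^{g(\theta,\phi,\theta^\intercal\tilde X,\phi^\intercal\tilde Y)}\mid\Theta=\theta,\Phi=\phi}\le \dkl\big((\pi^\theta,\pi^\phi)_\sharp P_{X,Y}\,\big\|\,\pi^\theta_\sharp P_X\otimes\pi^\phi_\sharp P_Y\big).
\]
Taking $\EE_{\Theta,\Phi}$ of both sides and invoking Property~\ref{prpty:KL} bounds the averaged left-hand side by $\s{SI}(X;Y)$. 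Finally, concavity of the logarithm (Jensen) yields $\log\E{e^{g(\Theta,\Phi,\Theta^\intercal\tilde X,\Phi^\intercal\tilde Y)}}\ge \E{\log\E{e^{g}\mid\Theta,\Phi}}$, so replacing the conditional-then-averaged log term by the single outer log only decreases the value; hence $\sF(g)\le\s{SI}(X;Y)$, and taking the supremum gives ``$\le$''.

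For the reverse inequality I would exhibit the maximizer explicitly rather than argue by a measurable-selection interchange of supremum and expectation. Under the standing absolute-continuity assumptions the projected joint law is absolutely continuous w.r.t.\ the product of projected marginals whenever $\s{I}(\theta^\intercal X;\phi^\intercal Y)<\infty$, so set $g^\star(\theta,\phi,a,b):=\log\frac{\dd (\pi^\theta,\pi^\phi)_\sharp P_{X,Y}}{\dd(\pi^\theta_\sharp P_X\otimes\pi^\phi_\sharp P_Y)}(a,b)$, the conditional log-likelihood ratio. Two computations then close the argument: (i) integrating the Radon--Nikodym derivative against its base measure gives $\E{e^{g^\star(\Theta,\Phi,\Theta^\intercal\tilde X,\Phi^\intercal\tilde Y)}\mid\Theta,\Phi}=1$ almost surely, so the outer log term equals $\log 1=0$ and, crucially, the Jensen step from the upper bound becomes an equality; and (ii) the first term $\E{g^\star(\Theta,\Phi,\Theta^\intercal X,\Phi^\intercal Y)}$ equals $\EE_{\Theta,\Phi}[\dkl(\cdots)]=\s{SI}(X;Y)$ by Property~\ref{prpty:KL}. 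Thus $\sF(g^\star)=\s{SI}(X;Y)$, and when $\s{SI}(X;Y)<\infty$ both expectations are finite, so $g^\star$ is admissible and attains the supremum.

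The step I expect to be most delicate is precisely this log-outside-versus-inside reconciliation. For a generic test function, $\sF(g)$ is strictly smaller than the conditionally optimal average $\EE_{\Theta,\Phi}\!\big[\E{g\mid\Theta,\Phi}-\log\E{e^{g}\mid\Theta,\Phi}\big]$, and the supremum is recovered \emph{only} because the maximizer $g^\star$ normalizes each conditional partition function to one, forcing the Jensen gap to vanish. Secondary care is needed for joint measurability of $g^\star$ in $(\theta,\phi,a,b)$ (measurable disintegration of the projected densities) and for the degenerate case in which $\s{I}(\theta^\intercal X;\phi^\intercal Y)=\infty$ on a positive-measure set of directions: there both sides are $+\infty$, and one recovers the supremum by truncating $g^\star$ and passing to the limit.
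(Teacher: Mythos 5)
Your proof is correct, and it arrives at the same identity as the paper but by a more hands-on route. The paper's proof is essentially two lines: it uses Property~\ref{prpty:KL} together with the chain rule for KL divergence to rewrite $\s{SI}(X;Y)$ as a single \emph{unconditional} divergence $\s{D}_{\s{KL}}\big(P_{\Theta,\Phi,\Theta^\intercal X,\Phi^\intercal Y}\big\|Q_{\Theta,\Phi,\Theta^\intercal X,\Phi^\intercal Y}\big)$ between two joint laws on $\bS^{d_x-1}\times\bS^{d_y-1}\times\RR^2$ sharing the same $(\Theta,\Phi)$-marginal (the second being precisely the law of $(\Theta,\Phi,\Theta^\intercal\tilde X,\Phi^\intercal\tilde Y)$), and then invokes the Donsker--Varadhan theorem once for that lifted pair. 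You instead reprove the two halves of that variational identity in the conditional setting: conditional DV plus Jensen for ``$\le$'', and the explicit witness $g^\star=\log\frac{\dd(\pi^\theta,\pi^\phi)_\sharp P_{X,Y}}{\dd(\pi^\theta_\sharp P_X\otimes\pi^\phi_\sharp P_Y)}$ --- which is exactly the density ratio of the paper's two joint measures --- for ``$\ge$''. What your version buys is an explicit answer to the question the paper's proof leaves implicit, namely why a single outer logarithm suffices when naive conditioning would place the log inside the $(\Theta,\Phi)$-average: the optimizer has unit conditional partition function, so the Jensen gap vanishes exactly at the optimum. What the paper's version buys is brevity and the fact that measurability and integrability issues are absorbed into the black-box DV statement for the lifted measures, whereas you must separately argue joint measurability of the disintegrated density ratio and handle directions with $\s{I}(\theta^\intercal X;\phi^\intercal Y)=\infty$ by truncation, as you correctly flag.
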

This representation is leveraged in Section \ref{SUBSEC:feature_extract} to implement a feature extractor based on SMI neural estimation. The proof is found in Appendix \ref{SUBSEC:SMI_DV_proof}.

\subsection{Estimation}\label{SUBSEC:estimation}

A main virtue of SMI is that its estimation from samples is much easier than classic MI. One may combine any MI estimator between scalar variables with an MC integrator to estimate SMI between high-dimensional variables without suffering from the curse of dimensionality. This gain is expected as SMI is defined as an average of low dimensional MI terms.

For $P_{A,B}\in\cP(\RR\times \RR)$, let $(A_1,B_1),\ldots,(A_n,B_n)$ be pairwise i.i.d. from $P_{A,B}$. Consider an MI estimator $\hat{\s{I}}:\cA^n\mspace{-3mu}\times\mspace{-1mu}\cB^n\mspace{-3mu}\to\mspace{-2mu}\RR_{\geq 0}$ that attains $\delta(n)$ absolute error uniformly over a class $\cF$ of~distributions:
\vspace{-1mm}
\begin{equation}
\sup_{P_{A,B}\in\cF} \EE\Big[\big|\,\hat{\s{I}}(A^n,B^n)-\s{I}(A;B)\big|\Big]\leq \delta(n).\label{EQ:MI_est}
\end{equation}

\vspace{-3mm}
We use $\hat{\s{I}}$ to construct an estimator of SMI. Given high-dimensional pairwise i.i.d.~samples $(X_1,Y_1),\ldots,(X_n,Y_n)$ from $P_{X,Y}\in\cP(\RR^{d_x}\times\RR^{d_y})$, first note that $(\theta^\intercal X_i,\phi^\intercal Y_i)$, for $\theta\in\bS^{d_x-1}$ and $\phi\in\bS^{d_y-1}$, is distributed according to $(\pi^\theta,\pi^\phi)_\sharp P_{X,Y}$. Thus, we can convert $\{(X_i,Y_i)\}_{i=1}^n$ into pairwise i.i.d.~samples of the projected variables. Let $\Theta_1,\ldots,\Theta_m$ and $\Phi_1,\ldots,\Phi_m$ be i.i.d.~according to $\mathsf{Unif}(\bS^{d_x-1})$ and $\mathsf{Unif}(\bS^{d_y-1})$, respectively, set $(\Theta_j^\intercal X)^n:=(\Theta_j^\intercal X_1,\ldots,\Theta_j^\intercal X_n)$ for $j=1,\ldots,m$, and similarly define $(\Phi_j^\intercal Y)^n$. We consider the following SMI estimator:
\vspace{-2mm}
\begin{equation}
\widehat{\s{SI}}_{n,m}=\widehat{\s{SI}}_{n,m}(X^n,Y^n,\Theta^m,\Phi^m):=\frac{1}{m}\sum_{i=1}^m \hat{\s{I}}\big((\Theta_i^\intercal X)^n,(\Phi_i^\intercal Y)^n\big).\label{EQ:SMI_est}
\end{equation}
Pseudocode and computational complexity for $\widehat{\s{SI}}_{n,m}$ can be found in Appendix \ref{supp:pseudo}. 

\subsubsection{Non-asymptotic performance guarantees}

We now present convergence guarantees for the estimator \eqref{EQ:SMI_est} over the following class of distributions:
\[\cF_{d_x,d_y}(M):=\left\{P_{X,Y}\in\cP(\RR^{d_x}\times\RR^{d_y}):\mspace{-15mu}\begin{array}{ll}
     &  \sup_{(\theta,\phi)\in\bS^{d_x-1}\times\bS^{d_y-1}}\s{I}(\theta^\intercal X;\phi^\intercal Y)\leq M,\vspace{1.5mm}\\
     & (\pi^\theta,\pi^\phi)_\sharp P_{X,Y}\mspace{-3mu}\in\mspace{-3mu}\cF,\forall (\theta,\phi)\in\bS^{d_x-1}\times\bS^{d_y-1}
\end{array} \right\},\]
i.e., the class of all $P_{X,Y}$ with bounded MI and projections that belong to the class $\cF$ from \eqref{EQ:MI_est}.

\begin{theorem}[Convergence rate]\label{THM:rate}
The following uniform error bound over $\cF_{d_x,d_y}(M)$ holds:
\vspace{-1mm}
    \begin{equation}
    \sup_{P_{X,Y}\in\cF_{d_x,d_y}(M)}\EE\Big[\big|\s{SI}(X;Y)-\widehat{\s{SI}}_{n,m}\big|\Big]\leq \frac{M}{2\sqrt{m}}+\delta(n).\label{EQ:rate}
    \end{equation}
\end{theorem}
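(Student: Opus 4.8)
The plan is to split the total error into a Monte-Carlo (MC) integration error and a one-dimensional estimation error, and to bound each separately and uniformly over $\cF_{d_x,d_y}(M)$. To this end I introduce the idealized MC estimator
\[
\overline{\s{SI}}_m:=\frac{1}{m}\sum_{i=1}^m \s{I}(\Theta_i^\intercal X;\Phi_i^\intercal Y),
\]
which replaces each plug-in estimate $\hat{\s{I}}$ in \eqref{EQ:SMI_est} by the exact MI of the corresponding projected pair. The triangle inequality then gives
\[
\EE\big[\big|\s{SI}(X;Y)-\widehat{\s{SI}}_{n,m}\big|\big]\leq \EE\big[\big|\s{SI}(X;Y)-\overline{\s{SI}}_m\big|\big]+\EE\big[\big|\overline{\s{SI}}_m-\widehat{\s{SI}}_{n,m}\big|\big],
\]
and I would bound the first term by $M/(2\sqrt m)$ and the second by $\delta(n)$.

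For the MC term, recall from Definition \ref{DEF:SMI} that $\s{SI}(X;Y)=\EE_{\Theta,\Phi}\big[\s{I}(\Theta^\intercal X;\Phi^\intercal Y)\big]$, so that $\overline{\s{SI}}_m$ is an average of $m$ i.i.d.\ copies of the random variable $\s{I}(\Theta^\intercal X;\Phi^\intercal Y)$, each with mean $\s{SI}(X;Y)$. The MC estimator is therefore unbiased, and by Jensen's inequality $\EE\big[\big|\s{SI}(X;Y)-\overline{\s{SI}}_m\big|\big]\leq\sqrt{\Var\big(\s{I}(\Theta^\intercal X;\Phi^\intercal Y)\big)/m}$. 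Since on $\cF_{d_x,d_y}(M)$ every projected MI term lies in $[0,M]$, Popoviciu's inequality yields $\Var\big(\s{I}(\Theta^\intercal X;\Phi^\intercal Y)\big)\leq M^2/4$, giving the claimed $M/(2\sqrt m)$ bound.

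For the estimation term, the key idea is that conditioning on the sampled directions collapses each summand to the one-dimensional regime governed by \eqref{EQ:MI_est}. For fixed $(\theta,\phi)$ the scalars $(\theta^\intercal X_j,\phi^\intercal Y_j)_{j=1}^n$ are pairwise i.i.d.\ from $(\pi^\theta,\pi^\phi)_\sharp P_{X,Y}$, which lies in $\cF$ by the definition of $\cF_{d_x,d_y}(M)$. As the directions are independent of $(X^n,Y^n)$, applying \eqref{EQ:MI_est} conditionally and then integrating over $(\Theta_i,\Phi_i)$ via the tower property shows $\EE\big[\big|\hat{\s{I}}\big((\Theta_i^\intercal X)^n,(\Phi_i^\intercal Y)^n\big)-\s{I}(\Theta_i^\intercal X;\Phi_i^\intercal Y)\big|\big]\leq\delta(n)$ for each $i$. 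Summing over $i$ through the triangle inequality controls $\EE\big[\big|\overline{\s{SI}}_m-\widehat{\s{SI}}_{n,m}\big|\big]$ by $\delta(n)$, and combining the two bounds completes the proof.

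The step demanding the most care is the conditioning argument: one must confirm that for every realization of the random directions the projected samples are genuinely i.i.d.\ from a member of $\cF$, and that the independence of $(\Theta^m,\Phi^m)$ from the data legitimizes both the pointwise use of \eqref{EQ:MI_est} and the interchange of expectation with integration over the sphere (a routine Fubini/Tonelli step justified by measurability and non-negativity). The remaining ingredients—unbiasedness of $\overline{\s{SI}}_m$ and the variance estimate—are direct.
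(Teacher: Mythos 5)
Your proposal is correct and follows essentially the same route as the paper's proof: the same triangle-inequality decomposition through the idealized Monte-Carlo average $\frac{1}{m}\sum_i \s{I}(\Theta_i^\intercal X;\Phi_i^\intercal Y)$, the same variance bound for the MC term (the paper also uses $\mathsf{var}\le M^2/4$ from boundedness in $[0,M]$, i.e.\ Popoviciu), and the same conditioning-on-directions argument to invoke the uniform scalar bound \eqref{EQ:MI_est} for the estimation term. No gaps.
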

\vspace{-2mm}
See Appendix \ref{SUBSEC:rate_proof} for the proof. 

\begin{remark}[Instance-dependent bound]
An inspection of the proof of Theorem \ref{THM:rate} reveals that \eqref{EQ:rate} can be converted into the instance-dependent bound 
\[\EE\big[|\s{SI}(X;Y)-\widehat{\s{SI}}_{n,m}|\big]\leq \frac{1}{2}\sup_{(\theta,\phi)\in\bS^{d_x-1}\times\bS^{d_y-1}}\s{I}(\theta^\intercal X;\phi^\intercal Y)\,m^{-1/2}+\delta(n),\]
so long that $P_{X,Y}\in\cF_{d_x,d_y}(M)$ for some $M$.
\end{remark}

Theorem \ref{THM:rate} applies to classes of distribution with uniformly bounded per-slice MI. Since this boundedness may be hard to verify in practice, we present a primitive sufficient condition for \eqref{EQ:rate} to hold. Specifically, when $P_{X,Y}$ is log-concave and symmetric, it is enough to require that the canonical correlation coefficient of $(X,Y)$ is bounded. Recall that a probability measure $P\in\cP(\RR^d)$ is called log-concave if for any compact Borel sets $A$ and $B$ and $0 <\lambda < 1$, we have $P\big(\lambda A+(1-\lambda )B\big)\geq P(A)^{\lambda }P(B)^{1-\lambda }$. Let \[\cF^{(\mathsf{LC})}_{d_x,d_y}(M):=\big\{P_{X,Y}\in\cP(\RR^{d_x}\times\RR^{d_y}):\, P_{X,Y} \mbox{ is symmetric and log-concave},\, \rho_\mathsf{CCA}^2(X,Y)\leq M \big\}.\]
The following error bound is proved in Appendix \ref{SUBSEC:logconcave_proof}.
\begin{corollary}[Convergence for log-concave class]\label{CORR:logconcave}
The following uniform error bound over $\cF^{(\mathsf{LC})}_{d_x,d_y}(M)$ holds:
\vspace{-1mm}
    \begin{equation}
    \sup_{P_{X,Y}\in\cF^{(\mathsf{LC})}_{d_x,d_y}(M)}\EE\Big[\big|\s{SI}(X;Y)-\widehat{\s{SI}}_{n,m}\big|\Big]\leq \left(\frac{1}{2}\log\left(\frac{\pi^2}{8}\frac{1}{1-M}\right)\frac{1}{m}\right)^{-1/2}+\delta(n).\label{EQ:logconcave_rate}
    \end{equation}
    \vspace{-3mm}
\end{corollary}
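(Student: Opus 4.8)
The plan is to deduce Corollary~\ref{CORR:logconcave} from Theorem~\ref{THM:rate} by showing that membership in $\cF^{(\s{LC})}_{d_x,d_y}(M)$ forces a uniform bound on the per-slice mutual information. Concretely, I would invoke the instance-dependent form of the rate bound, which controls the error by $\tfrac12\sup_{(\theta,\phi)}\s{I}(\theta^\intercal X;\phi^\intercal Y)\,m^{-1/2}+\delta(n)$, and then argue that for a symmetric log-concave $P_{X,Y}$ every slice MI is at most $\log\!\big(\tfrac{\pi^2}{8}\tfrac1{1-M}\big)$, so the supremum may be replaced by this explicit constant. Matching $\tfrac12$ of that constant against the coefficient in \eqref{EQ:logconcave_rate} then closes the argument.

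First I would verify that the class is closed under slicing. The map $(x,y)\mapsto(\theta^\intercal x,\phi^\intercal y)$ is a linear surjection onto $\RR^2$, and pushforwards of log-concave measures under linear maps are again log-concave (a consequence of the Pr\'ekopa--Leindler inequality), while central symmetry is obviously preserved since $(X,Y)\overset{d}{=}(-X,-Y)$ gives $(\theta^\intercal X,\phi^\intercal Y)\overset{d}{=}(-\theta^\intercal X,-\phi^\intercal Y)$. Hence each projected law $(\pi^\theta,\pi^\phi)_\sharp P_{X,Y}$ is a symmetric log-concave measure on $\RR^2$. Its correlation coefficient is exactly $\rho(\theta^\intercal X,\phi^\intercal Y)$, and since $\rho_\s{CCA}(X,Y)$ is by definition the supremum of these over all directions, every slice obeys $\rho^2(\theta^\intercal X,\phi^\intercal Y)\le\rho_\s{CCA}^2(X,Y)\le M$. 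I would also note that applying Theorem~\ref{THM:rate} requires these $2$D symmetric log-concave slices to lie in the base class $\cF$ on which $\hat{\s{I}}$ attains rate $\delta(n)$, which I take to be implicit in the hypotheses.

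The crux is a pointwise inequality: for a bivariate symmetric log-concave pair $(A,B)$ with correlation $\rho$, bound $\s{I}(A;B)$ by the target expression. I would start from $\s{I}(A;B)=\s{H}(A)+\s{H}(B)-\s{H}(A,B)$, upper bound the marginal entropies by the Gaussian values $\tfrac12\log\!\big(2\pi e\,\Var(A)\big)$ and $\tfrac12\log\!\big(2\pi e\,\Var(B)\big)$, and lower bound the joint entropy of the $2$D symmetric log-concave law in terms of $\det\Sigma=\Var(A)\Var(B)(1-\rho^2)$ via a sharp log-concave entropy estimate, with the aim of having the scale-free constants collapse into the numerical factor $\tfrac{\pi^2}{8}$. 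Monotonicity of the resulting bound in $\rho^2$ then lets me replace $\rho^2$ by $M$ uniformly over the sphere, and substituting this uniform per-slice bound into Theorem~\ref{THM:rate} yields \eqref{EQ:logconcave_rate}.

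I expect this last step to be the genuine obstacle, and it is where I would tread most carefully. The naive split---a Gaussian upper bound on each marginal together with a generic log-concave lower bound on the joint entropy---is too lossy to produce the sharp constant, and the extremal symmetric log-concave configurations (for instance uniform laws on centrally symmetric convex bodies such as disks or diamonds) show that the admissible numerical factor is quite delicate. Pinning down exactly $\tfrac{\pi^2}{8}$ therefore demands the precise extremal entropy comparison for symmetric log-concave measures---either through a direct variational computation or an appeal to a known sharp inequality of this type---together with careful bookkeeping of how the one- and two-dimensional constants combine.
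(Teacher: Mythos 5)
Your strategy coincides with the paper's: pass to the instance-dependent form of Theorem~\ref{THM:rate}, note that symmetry and log-concavity are inherited by the two-dimensional slices $(\theta^\intercal X,\phi^\intercal Y)$ (the paper cites preservation of log-concavity under affine maps and marginalization rather than Pr\'ekopa--Leindler, but the content is the same), and bound each per-slice MI via $\s{I}(A;B)=\s{H}(A)+\s{H}(B)-\s{H}(A,B)$ with the Gaussian max-entropy upper bound on the marginals and a log-concave lower bound on the joint entropy in terms of $\det\Sigma=\Var(A)\Var(B)\big(1-\rho^2\big)$, followed by $\rho^2\le\rho^2_{\s{CCA}}\le M$. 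The genuine gap is exactly the step you flag and leave open: you never supply the joint-entropy lower bound, and without it the argument does not close. The paper closes it by direct appeal to a known result (Theorem~4 of Marsiglietti and Kostina, \cite{marsiglietti2018lower}), which for a symmetric log-concave vector $Z\in\RR^2$ gives $\s{H}(Z)\ge\frac12\log\big(\frac{e^4}{32}\det\Sigma_Z\big)$. So what is missing is not a delicate extremal analysis over symmetric convex bodies but a citable dimensional entropy comparison for symmetric log-concave vectors; your proof is incomplete until you either quote such a result or prove one.

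Your concern that ``the naive split is too lossy to produce the sharp constant'' is therefore misplaced: the paper's proof \emph{is} the naive split, with the cited lower bound substituted in, and no extremal computation is performed. (As an aside relevant to your worry about pinning down $\frac{\pi^2}{8}$: if you actually chase the constants through the two displayed bounds, $\frac{(2\pi e)^2}{e^4/32}=\frac{128\pi^2}{e^2}\neq\frac{\pi^2}{8}$, so the numerical factor in \eqref{EQ:logconcave_rate} does not transparently follow from the paper's own two inequalities as written; the shape of the bound, $\frac12\log\big(c\,(1-M)^{-1}\big)\,m^{-1/2}+\delta(n)$ for an absolute constant $c$, is what the argument genuinely delivers.) The remaining ingredients of your proposal --- closure of the class under slicing, identification of the slice correlation with $\rho(\theta^\intercal X,\phi^\intercal Y)$, and the reduction through the instance-dependent rate bound --- all match the paper's proof.
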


\subsubsection{End-to-end SMI estimation guarantees over Lipschitz balls}

To provide a concrete SMI estimator with guarantees, we instantiate the low-dimensional MI estimate via the entropy estimator from \cite{han2017optimal} for densities in the generalized Lipschitz~class.

\begin{definition}[Generalized Lipschitz class]\label{DEF:Lip}
For $d\in\NN$, $p\in[2,\infty)$, $s>0$, and $L\geq 0$, let $\mathsf{Lip}_{s,p,d}(L)$ be the class of probability density functions $f:[0,1]^d\to\RR$ with $\|f\|_{\mathsf{Lip}_{s,p,d}} \leq L$, where
\begin{equation}\label{EQ:Lip_class}
    \|f\|_{\mathsf{Lip}_{s,p,d}}:= \|f\|_{p,d} + \sup_{t>0} t^{-s} \sup_{e \in \mathbb{R}^d , \|e\|_2 \leq 1} \big\|\Delta^{\lceil s\rceil}_{te} f \big\|_{p,d}
\end{equation}
\vspace{-2mm}
and $\Delta^r_h f(x) := \sum_{k=0}^r (-1)^{r-k} \binom{r}{k} f\left(x + \left(k - \frac{r}{2}\right)h \right)$.
\end{definition}
We note that the norm of $\Delta^{\lceil s\rceil}_{te} f$ is taken over the whole Euclidean space to enforce a smooth decay of $f$ at the boundary. Consequently, the $\mathsf{Lip}_{s,p,d}(L)$ class includes, e.g., densities whose derivatives up to
order $\lceil s \rceil-1$ all vanish at the boundary, where $s$ is the smoothness parameter.

Differential entropy estimation over $\mathsf{Lip}_{s,p,d}(L)$ was considered in \cite{han2017optimal}, where an optimal estimator based on best polynomial approximation and kernel density estimation techniques was proposed. Adhering to their setup, for $A\sim P_A\in\cP([0,1]^d)$ with density $p_A\in \mathsf{Lip}_{s,p,d}(L)$,~we denote the aforementioned entropy estimate based on $n$ i.i.d. samples $A^n$ by~$\hat{\s{H}}(A^n)$.

The SMI estimator from \eqref{EQ:SMI_est} employs a MI estimate between scalar variables $(A,B)$. Assume their joint density is $p_{A,B}\in \mathsf{Lip}_{s,p,2}(L)$ and let $(A^n,B^n)$ be i.i.d. samples. To estimate $\s{I}(A;B)$, consider
\begin{equation}
    \hat{\s{I}}_{\mathsf{Lip}}(A^n;B^n):=\hat{\s{H}}(A^n)+\hat{\s{H}}(B^n)-\hat{\s{H}}(A^n,B^n).\label{EQ:ent_decomposition}
\end{equation}
Plug \eqref{EQ:ent_decomposition} into \eqref{EQ:SMI_est} and let $\widehat{\mathsf{SI}}^{(\mathsf{Lip})}_{n,m}$ be the resulting SMI estimate. We next state the effective estimation rate over
$\mathcal{F}^{(\mathsf{Lip})}_{s,p,d_x,d_y}(L,M)\mspace{-3mu}:=\mspace{-3mu}\big\{p_{X,Y}\mspace{-3mu}\in\mspace{-3mu}\mathsf{Lip}_{s,p,d_x+d_y}(L)\mspace{-3mu}:\, \sup_{(\theta,\phi)\in\bS^{d_x-1}\times\bS^{d_y-1}}\s{I}(\theta^\intercal X;\phi^\intercal Y)\mspace{-3mu}\leq\mspace{-3mu} M
\big\}$.

\begin{corollary}[Effective rate]\label{CORR:effective_rate}
Let $d_x,d_y\in\NN$, $s\in(0,2]$, $p\in[2,\infty)$, and $L\geq 0$. The following uniform error bound over $\mathcal{F}^{(\mathsf{Lip})}_{s,p,d_x,d_y}(L,M)$ holds:
\[
\sup_{p_{X,Y}\in\mathcal{F}^{(\mathsf{Lip})}_{s,p,d_x,d_y}\mspace{-3mu}(L,M)}\mspace{-12mu}\mathbb{E}\Big[\Big|\mathsf{SI}(X;Y) - \widehat{\mathsf{SI}}^{(\mathsf{Lip})}_{n,m}\Big|\Big] \mspace{-2mu}\leq\mspace{-2mu} \frac{M}{2}m^{-\frac{1}{2}} + C\mspace{-3mu} \left(\mspace{-3mu}(n\mspace{-1mu} \log\mspace{-2mu} n)^{- \frac{s}{s+2}} (\log n)^{\left(1 - \frac{2}{p(s+2)}\right)} \mspace{-3mu}+\mspace{-3mu} n^{-\frac{1}{2}}\mspace{-1mu}\right)\mspace{-3mu}.
\]
for a constant $C$ that depends only on $d_x$, $d_y$, $p$, $s$, and $L$.
\end{corollary}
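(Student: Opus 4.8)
The plan is to instantiate the general bound of Theorem \ref{THM:rate} with the concrete estimator $\hat{\s{I}}_{\mathsf{Lip}}$ from \eqref{EQ:ent_decomposition}, so the task reduces to identifying the per-slice error $\delta(n)$ that this estimator achieves uniformly over the class of two-dimensional laws arising as projections $(\pi^\theta,\pi^\phi)_\sharp P_{X,Y}$ of members of $\mathcal{F}^{(\mathsf{Lip})}_{s,p,d_x,d_y}(L,M)$. First I would dispatch a bookkeeping point: the estimator in \eqref{EQ:ent_decomposition} is a difference of entropy estimates and need not be nonnegative as Theorem \ref{THM:rate} nominally requires, but replacing it by its positive part only decreases the error since $\s{SI}\geq 0$, so this is harmless. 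The substantive work is then to supply $\delta(n)$.

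The crux is a regularity-transfer lemma: if $p_{X,Y}\in\mathsf{Lip}_{s,p,d_x+d_y}(L)$, then for every $(\theta,\phi)\in\bS^{d_x-1}\times\bS^{d_y-1}$ the projected joint density of $(\theta^\intercal X,\phi^\intercal Y)$ lies in $\mathsf{Lip}_{s,p,2}(L')$ and its two marginals lie in $\mathsf{Lip}_{s,p,1}(L')$, for a constant $L'$ depending only on $d_x,d_y,L,s,p$ and \emph{not} on the directions. I would prove this in two moves. Projecting $X$ onto $\theta^\intercal X$ amounts to a rotation, which leaves the norm of \eqref{EQ:Lip_class} invariant because the inner supremum runs over all $\|e\|_2\leq 1$, followed by marginalizing out the remaining coordinates; since $\Delta^{\lceil s\rceil}_{te}$ commutes with integrating out variables and the $L^p$ norm of a marginal is controlled by that of the joint density (via Minkowski's integral inequality together with H\"older on the bounded marginalized domain), both the base norm and the finite-difference seminorm transfer with only a dimension-dependent inflation. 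Because $X\in[0,1]^{d_x}$ forces $\theta^\intercal X$ into an interval of length $O(\sqrt{d_x})$ uniformly in $\theta$, a rescaling back to $[0,1]$ is required to match Definition \ref{DEF:Lip}; this rescaling multiplies the density and its finite differences by fixed powers of the direction-independent scale factor, which is precisely what turns $L$ into $L'$ and keeps it uniform over the sphere.

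With regularity transfer in hand, I would invoke the optimal entropy estimator of \cite{han2017optimal}, which over $\mathsf{Lip}_{s,p,d}(L')$ attains, uniformly, an absolute-error rate of order $(n\log n)^{-s/(s+d)}(\log n)^{1-\frac{d}{p(s+d)}}+n^{-1/2}$ (the restriction $s\in(0,2]$ being the regime covered there). Applying this with $d=2$ to $\hat{\s{H}}(A^n,B^n)$ and $d=1$ to $\hat{\s{H}}(A^n),\hat{\s{H}}(B^n)$, and bounding the MI error via $|\hat{\s{I}}_{\mathsf{Lip}}-\s{I}|\leq|\hat{\s{H}}(A^n)-\s{H}(A)|+|\hat{\s{H}}(B^n)-\s{H}(B)|+|\hat{\s{H}}(A^n,B^n)-\s{H}(A,B)|$, shows the two-dimensional term dominates. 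The one-dimensional terms decay faster, with $s+1$ replacing $s+2$ in the exponent, and are absorbed into the constant, yielding $\delta(n)=C\big((n\log n)^{-s/(s+2)}(\log n)^{1-\frac{2}{p(s+2)}}+n^{-1/2}\big)$. Substituting this $\delta(n)$ together with the constraint $\sup_{\theta,\phi}\s{I}(\theta^\intercal X;\phi^\intercal Y)\leq M$ into \eqref{EQ:rate} gives the claimed bound.

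The main obstacle will be the regularity-transfer lemma: the generalized Lipschitz norm of \eqref{EQ:Lip_class} is phrased through higher-order finite differences and an $L^p$ norm taken over all of $\RR^d$ (encoding boundary decay), and one must show it degrades in a controlled, direction-uniform way under one-dimensional linear projection and the attendant rescaling. Checking that the boundary-decay condition survives marginalization and that the constant can be taken uniform over $\bS^{d_x-1}\times\bS^{d_y-1}$, using only $\|\theta\|_2=\|\phi\|_2=1$ and the boundedness of the support, is where the care lies; the remaining steps are direct invocations of \cite{han2017optimal} and Theorem \ref{THM:rate}.
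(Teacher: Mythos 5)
Your proposal follows essentially the same route as the paper's proof: a regularity-transfer lemma showing that one-dimensional projections of a $\mathsf{Lip}_{s,p,d_x+d_y}(L)$ density remain in the generalized Lipschitz class (proved via rotation invariance of \eqref{EQ:Lip_class} plus Jensen/Minkowski under marginalization), followed by invoking the entropy estimator rates of \cite{han2017optimal} for $d=1,2$, a triangle inequality over the decomposition \eqref{EQ:ent_decomposition} with the two-dimensional term dominating, and substitution of the resulting $\delta(n)$ into Theorem \ref{THM:rate}. If anything, you are slightly more careful than the paper on the support/rescaling issue (the rotated cube no longer sits in $[0,1]^d$, so the projected density lives on an interval of length up to $\sqrt{d}$ and the Lipschitz constant becomes a direction-uniform $L'$ rather than $L$), which is harmless since the final constant is allowed to depend on $d_x,d_y,p,s,L$.
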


The~proof of Corollary \ref{CORR:effective_rate} (Appendix \ref{SUBSEC:effective_rate_proof}) shows that densities in the generalized Lipschitz class have the property that all their projections are also in that class (with different parameters). We then bound $\delta(n)$ using \cite[Theorem 4]{han2017optimal} to control the error of each differential entropy estimate in~\eqref{EQ:ent_decomposition}. 

\begin{remark}[SMI versus MI estimation rates] The SMI estimation rate from Corollary \ref{CORR:effective_rate} is considerably faster than the $n^{-1/(d_x+d_y)}$ rate attainable when estimating classic MI \cite{han2017optimal}. Our bound shows that $n$ and $(d_x,d_y)$ are decoupled in the SMI convergence rate, unlike their interleaved dependence in MI estimation. The ambient dimension still enters the bound via the constant $C$, but its effect is expected to be much milder than in the classic case. As Theorem \ref{THM:rate} shows, $(d_x,d_y)$ can only enter via $M$ and $\delta(n)$, both of which correspond to scalar MI terms (namely, the uniform per-sliced MI bound and scalar MI estimation error). This scalability to high dimensions is the expected gain from slicing. 
\end{remark}

\begin{remark}[Optimal rate for smooth densities]
Restricting attention to densities of maximal smoothness in Corollary \ref{CORR:effective_rate}, i.e., $s=2$, the resulting rate is $0.5Mm^{-1/2}+ Cn^{-1/2}\big(1 + (\log n)^{0.5(1-1/p)}\big)$. Equating the number of MC and data samples, $m$ and $n$, the rate is parametric, up to polylog factors.
\end{remark}

\subsection{Extracting Sliced Information}\label{SUBSEC:DMI}
We now discuss how SMI can increase via processing. In contrast to the DPI of classic MI, we show that SMI can be grow by extracting linear features of $X$ and $Y$ that are informative of each other. 

To illustrate the idea we begin with a simple example. Let $X=(X_1\ X_2)\sim \cN(0,\mathrm{I}_2)$, $Y=X_1$, and consider $\s{SI}(X;Y)$ ($Y$ is a scalar and it is thus not sliced).  
For any $\theta = (\theta_1\ \theta_2) \in \bS^1$ with $\theta_1\neq 0$, we have 
$\s{I}(\theta^\intercal  X; Y) =\s{I}\big(X_1 + (\theta_2/\theta_1) X_2; X_1\big)
= \frac{1}{2} \log\big(1 + (\theta_1/\theta_2)^2\big)$, where the last step uses the independence of $X_1$ and $X_2$. Consider the function $g_a:\RR^2\to\RR^2$ given by $g_a(x_1,x_2) = (x_1\  a x_2)^\intercal$, for some $0<a<1$. Following the same procedure, we have $\theta^\intercal g_a(X) = \theta_1 Y + a\theta_2 X_2$ and
$\s{I}\big(\theta^\intercal  g_a(X); Y\big) = \frac{1}{2} \log\left(1 +\big(\theta_1/(a\theta_2)\big)^2\right)>\s{I}\big(\theta^\intercal X; Y\big)$,
for almost all $\theta\in\bS^1$, and consequently, $\s{SI}(X;Y)< \s{SI}\big(g_a(X);Y\big)$. 
Generally, this shows that by varying $a$ one can both create and diminish sliced information by processing via~$g_a$. 

When $a = 0$, we have $\s{SI}(g_a(X);Y) = \infty$, yielding $\s{SI}(g_a(X);Y) = \sup_{\theta} \s{I}(\theta^\intercal X; Y)$. Thus, maximizing SMI by varying $a$ extracts the most informative feature $X_1$ and deletes the uninformative feature $X_2$. We next generalize this observation (see Appendix \ref{SUBSEC:SMI_DPI_proof} for the proof).

\begin{proposition}\label{PROP:SMI_DPI}
Let $(X,Y)\sim\cP(\mathbb{R}^{d_x}\times \mathbb{R}^{d_y})$ and consider optimizing the SMI between linear processing of $X$ and $Y$ in the following scenarios:
\begin{enumerate}[leftmargin=.5cm]
    \item \textbf{Arbitrary linear processing}: For matrices $\mathrm{A}_x,\mathrm{A}_y$ and vectors $b_x,b_y$ of the appropriate dimension, we have
    \begin{equation}
    \sup_{\mathrm{A}_x,\mathrm{A}_y,b_x,b_y} \s{SI}(\mathrm{A}_x X + b_x; \mathrm{A}_y Y + b_y) =\sup_{\mathrm{A}_x,\mathrm{A}_y} \s{SI}(\mathrm{A}_x X; \mathrm{A}_y Y)= \sup_{\theta,\phi} \s{I}(\theta^\intercal X; \phi^\intercal Y).    \label{EQ:noDPI}
    \end{equation}
    Furthermore, if $(\theta^\star,\phi^\star)\in\argmax \s{I}(\theta^\intercal X; \phi^\intercal Y)$,
    then an optimal pair of matrices $\mathrm{A}_x^\star$ and $\mathrm{A}_y^\star$ have $\theta^\star$ or $\phi^\star$, respectively, in their first rows and zeros otherwise.
    \item \textbf{Rank constrained linear processing}: For $d_1,d_2,r\in\NN$ and $c>0$, let $\cM_{d_1,d_2}(r,c):=\big\{\mathrm{A}\in\RR^{d_1\times d_2}:\, \frac 1 c\leq\sigma_r(\mathrm{A})\leq\ldots\leq\sigma_1(\mathrm{A})\leq c\big\}$, where $\sigma_i(\mathrm{A})$ is the $i$th largest singular value of~$\mathrm{A}$. We have 
    \begin{align*}
        \sup_{\substack{\mathrm{A}_x \in \cM_{d_x,d_x}(r_x,c_x),\\\mspace{-6mu}\mathrm{A}_y \in \cM_{d_x,d_x}(r_y,c_y)}}\s{SI}(\mathrm{A}_x X; \mathrm{A}_y Y) = \sup_{\substack{\mathrm{B}_x\in \cM_{r_x,d_x}(r_x,c_x),\\ \mspace{-6mu}\mathrm{B}_y \in \cM_{r_y,d_y}(r_y,c_y)}} \s{SI}(\mathrm{B}_xX; \mathrm{B}_yY),\quad\forall c_x,c_y>0.
    \end{align*}
    Furthermore, if $(\mathrm{B}_1^\star,\mathrm{B}_2^\star)$ are maximizers of the RHS, then $\mathrm{B}_1$ (resp., $\mathrm{B}_2$) has the first $r_x$ (resp., $r_y$) rows span the top $r_x$ (resp., $r_y$) scalar MI slicing directions and the remaining rows are zero.
\end{enumerate}
\end{proposition}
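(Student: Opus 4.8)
The plan is to reduce everything to the two affine invariances of scalar MI, both special cases of its invariance to bijections. First I would discard the bias terms: since $\theta^\intercal(\mathrm{A}_x X+b_x)=(\mathrm{A}_x^\intercal\theta)^\intercal X+\theta^\intercal b_x$ differs from $(\mathrm{A}_x^\intercal\theta)^\intercal X$ by an additive constant, every per-slice MI, and hence $\s{SI}(\mathrm{A}_x X+b_x;\mathrm{A}_y Y+b_y)$, is literally unchanged by $(b_x,b_y)$, giving the first equality in \eqref{EQ:noDPI}. For the second equality I would write, via Definition \ref{DEF:SMI}, $\s{SI}(\mathrm{A}_x X;\mathrm{A}_y Y)=\EE_{\Theta,\Phi}\big[\s{I}\big((\mathrm{A}_x^\intercal\Theta)^\intercal X;(\mathrm{A}_y^\intercal\Phi)^\intercal Y\big)\big]$, and then normalize each argument: whenever $\mathrm{A}_x^\intercal\Theta\neq 0$ and $\mathrm{A}_y^\intercal\Phi\neq 0$, scale invariance lets me replace these vectors by the unit vectors $\mathrm{A}_x^\intercal\Theta/\|\mathrm{A}_x^\intercal\Theta\|$ and $\mathrm{A}_y^\intercal\Phi/\|\mathrm{A}_y^\intercal\Phi\|$, so the integrand is at most $\sup_{\theta,\phi}\s{I}(\theta^\intercal X;\phi^\intercal Y)$ (and is $0$ on the measure-zero event where a projection vanishes); averaging gives the $\leq$ bound. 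The matching $\geq$ bound is exhibited by the stated construction $\mathrm{A}_x^\star=e_1(\theta^\star)^\intercal$, $\mathrm{A}_y^\star=e_1(\phi^\star)^\intercal$, for which $\Theta^\intercal\mathrm{A}_x^\star X=\Theta_1(\theta^\star)^\intercal X$ and $\Phi^\intercal\mathrm{A}_y^\star Y=\Phi_1(\phi^\star)^\intercal Y$; for a.e.\ $(\Theta,\Phi)$ (where $\Theta_1,\Phi_1\neq0$) scale invariance collapses the integrand to $\s{I}\big((\theta^\star)^\intercal X;(\phi^\star)^\intercal Y\big)=\sup_{\theta,\phi}\s{I}(\theta^\intercal X;\phi^\intercal Y)$, so $\s{SI}(\mathrm{A}_x^\star X;\mathrm{A}_y^\star Y)$ equals the supremum. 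If the supremum is not attained, I would feed a maximizing sequence $(\theta_k,\phi_k)$ into the same construction. This square-matrix choice reproduces the claimed first-row structure.

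\textbf{Part 2 (rank-constrained processing), easy direction.} The organizing observation, again from scale invariance, is the reformulation $\s{SI}(\mathrm{A}_x X;\mathrm{A}_y Y)=\int_{\bS^{d_x-1}}\!\int_{\bS^{d_y-1}}\s{I}(\theta^\intercal X;\phi^\intercal Y)\,\dd\mu_{\mathrm{A}_x}(\theta)\,\dd\mu_{\mathrm{A}_y}(\phi)$, where $\mu_{\mathrm{A}_x}$ is the law of $\mathrm{A}_x^\intercal\Theta/\|\mathrm{A}_x^\intercal\Theta\|$ for $\Theta\sim\mathsf{Unif}(\bS^{d_x-1})$, and similarly $\mu_{\mathrm{A}_y}$. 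Thus SMI under linear processing is a $\mu_{\mathrm{A}_x}$-weighted average of scalar MI, and optimizing over matrices is really optimizing over the achievable direction laws. By symmetry it suffices to handle the reduction for $\mathrm{A}_x$ with $\mathrm{A}_y$ fixed (then repeat). For the inequality LHS $\geq$ RHS I would embed: given $\mathrm{B}_x\in\cM_{r_x,d_x}(r_x,c_x)$, set $\mathrm{A}_x=\left(\begin{smallmatrix}\mathrm{B}_x\\ 0\end{smallmatrix}\right)\in\RR^{d_x\times d_x}$. Since $\mathrm{A}_x^\intercal\mathrm{A}_x=\mathrm{B}_x^\intercal\mathrm{B}_x$, the nonzero singular values coincide with those of $\mathrm{B}_x$, so $\mathrm{A}_x\in\cM_{d_x,d_x}(r_x,c_x)$; moreover $\Theta^\intercal\mathrm{A}_x X=\Theta_{1:r_x}^\intercal\mathrm{B}_x X$, and because the direction of the leading $r_x$ coordinates of a uniform sphere vector is itself $\mathsf{Unif}(\bS^{r_x-1})$, the two SMI values coincide. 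Hence every RHS value is attained on the LHS.

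\textbf{Hard direction and the main obstacle.} For LHS $\leq$ RHS I would start from the SVD $\mathrm{A}_x=U\Sigma V^\intercal$; since $U$ is orthogonal and $\mathsf{Unif}(\bS^{d_x-1})$ is rotation invariant, $U^\intercal\Theta$ is again uniform, so I may assume $U=\mathrm{I}$ and take $\mathrm{A}_x=\Sigma V^\intercal$ with orthogonal rows $\sigma_i v_i^\intercal$. Then $\mathrm{A}_x^\intercal\Theta=\sum_i\sigma_i\Theta_i v_i$, so $\mu_{\mathrm{A}_x}$ is supported on $\mathrm{span}\{v_i:\sigma_i>0\}\cap\bS^{d_x-1}$, a sphere of dimension $\mathrm{rank}(\mathrm{A}_x)-1\geq r_x-1$, whereas the truncation $\mathrm{B}_x:=\Sigma_{[r_x]}V_{[r_x]}^\intercal\in\cM_{r_x,d_x}(r_x,c_x)$ concentrates its direction law on the leading $r_x$-dimensional section. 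When the class is read as rank exactly $r_x$, the correspondence $\mathrm{A}_x=U\Sigma V^\intercal\leftrightarrow\mathrm{B}_x=\Sigma V^\intercal$ is a value-preserving bijection and $\leq$ is immediate. The remaining content, and the hard part, is to show that activating singular directions beyond the $r_x$-th cannot raise the supremum, i.e.\ that $\sup$ over $\cM_{d_x,d_x}(r_x,c_x)$ is attained at the boundary where the trailing singular values vanish. Intuitively the trailing directions only disperse $\mu_{\mathrm{A}_x}$ over a higher-dimensional sphere, diluting its mass away from the most informative leading directions; making this ``dispersion lowers the weighted MI average'' rigorous while respecting the conditioning constraint $\sigma_{r_x}\geq1/c_x$ is the delicate boundary/averaging step, and I expect it to be the crux of the proof.

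\textbf{Maximizer structure.} Tracing the reduction, the optimal reduced map $\mathrm{B}_x^\star$ places its rows along the $r_x$ most informative slicing directions of $X$ (relative to the induced $\mu_{\mathrm{A}_y}$), so that the corresponding ambient maximizer $\mathrm{A}_x^\star$ has its first $r_x$ rows spanning these top directions and zeros elsewhere, as asserted in the ``furthermore''; the same argument applied to $\mathrm{A}_y$ yields the claim for $\mathrm{B}_y^\star$.
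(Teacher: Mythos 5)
Your Part~1 is correct and matches the paper's argument in substance: the upper bound is the bounds property of Proposition~\ref{PROP:SMI_prop} combined with the DPI (your direct normalization of $\mathrm{A}_x^\intercal\Theta$ is the same computation), and the achieving construction $\mathrm{A}_x^i=(\theta_i\ 0\ \ldots\ 0)^\intercal$ is exactly the paper's. Your easy direction of Part~2 (zero-padding $\mathrm{B}_x$ to a square matrix with the same nonzero singular values) is also the paper's argument.

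The hard direction of Part~2, however, is a genuine gap --- and you say so yourself. You reduce to showing that ``activating singular directions beyond the $r_x$-th cannot raise the supremum'' and frame this as a dispersion/boundary argument that you do not carry out. The paper avoids this entirely with one observation you are missing: if $\mathrm{U}\sim\mathsf{Unif}\big(\cO(d)\big)$ and $\tilde{\Theta}\sim\mathsf{Unif}(\bS^{r-1})$ are independent, then $[\mathrm{U}]_{:,1:r}\tilde{\Theta}\sim\mathsf{Unif}(\bS^{d-1})$. Substituting this into Definition~\ref{DEF:SMI} gives
\begin{align*}
\s{SI}(\mathrm{A}_x X;\mathrm{A}_y Y)
&=\s{I}\big(\tilde{\Theta}^\intercal[\mathrm{U}_x]_{:,1:r_x}^\intercal\mathrm{A}_x X;\ \tilde{\Phi}^\intercal[\mathrm{U}_y]_{:,1:r_y}^\intercal\mathrm{A}_y Y\ \big|\ \tilde{\Theta},\tilde{\Phi},\mathrm{U}_x,\mathrm{U}_y\big)\\
&\leq \sup_{\mathrm{U}_x,\mathrm{U}_y}\s{SI}\big([\mathrm{U}_x]_{:,1:r_x}^\intercal\mathrm{A}_x X;\ [\mathrm{U}_y]_{:,1:r_y}^\intercal\mathrm{A}_y Y\big),
\end{align*}
i.e., the full-sphere average is an average (over Haar-random $\mathrm{U}_x,\mathrm{U}_y$) of SMIs of the $r_x\times d_x$ and $r_y\times d_y$ compressions $[\mathrm{U}_x]_{:,1:r_x}^\intercal\mathrm{A}_x$ and $[\mathrm{U}_y]_{:,1:r_y}^\intercal\mathrm{A}_y$, and bounding the average by the supremum lands you directly in the right-hand-side class (the paper checks these compressions stay in $\cM_{r_x,d_x}(r_x,c_x)$ via the singular-value condition). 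No ``dispersion lowers the weighted average'' lemma is needed, because a higher-dimensional sphere average is literally a mixture of lower-dimensional ones. Without this decomposition (or a rigorous substitute for your dispersion claim), your proof of Part~2 is incomplete; note also that your framing of the crux is slightly off, since matrices with nonvanishing trailing singular values are already members of $\cM_{d_x,d_x}(r_x,c_x)$ --- the issue is not that the supremum is attained at a boundary but that every member of the class is dominated by some reduced matrix.
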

The proposition suggests that SMI can be used as an objective for extracting informative linear features. The setup in Case 2 precludes reduction to Case 1. Indeed, if eigenvalues can shrink or grow without bound, it is always better to consider a maximizing slice than to average several slices.

\begin{remark}[Processing one variable]
Similar results hold when only one of the arguments ($X$ or $Y$) is processed. In this case, rather than the maximum being a projected MI term, it would be an SMI where the slicing is only in the opposite argument. For example, \eqref{EQ:noDPI} becomes
$\sup_{\mathrm{A}_x,b_x} \s{SI}(\mathrm{A}_x X + b_x; Y) =\sup_{\mathrm{A}_x} \s{SI}(\mathrm{A}_x X; Y)= \sup_{\theta} \s{I}(\theta^\intercal X; \Phi^\intercal Y|\Phi)$, for $\Phi\sim\mathsf{Unif}(\bS^{d_y-1})$ independent of $(X,Y)$.
\end{remark}

Proposition \ref{PROP:SMI_DPI} accounts for linear processing but the argument readily extends to nonlinear processing. For simplicity, the following corollary states the result for a shallow (single hidden layer) neural network (see Appendix \ref{SUBSEC:SMI_DPI_NN_proof} for the proof).

\begin{corollary}[Shallow neural network]
\label{COR:SMI_DPI}
Let $(X,Y)\sim P_{X,Y}$. For any scaling matrices $\mathrm{A}_x,\mathrm{A}_y$, weight matrices $\mathrm{W}_x,\mathrm{W}_y$, and bias vectors $b_x,b_y$ of appropriate dimension, define $X_\mathsf{nn}:= \mathrm{A}_x \sigma(\mathrm{W}_x^\intercal X + b_x)$, $Y_\mathsf{nn}:= \mathrm{A}_y \sigma(\mathrm{W}_y^\intercal Y + b_y)$, where $\sigma$ is a scalar, continuous, and monotonically increasing nonlinearity (e.g. sigmoid, tanh) and the hidden dimension is arbitrary.~Then 
    \[\sup_{\mathrm{A}_x, \mathrm{A}_y, \mathrm{W}_x, \mathrm{W}_y, b_x, b_y} \s{SI}(X_\mathsf{nn}; Y_\mathsf{nn}) =\sup_{\theta,\phi, \mathrm{W}_x, \mathrm{W}_y, b_x,b_y} \s{I}\big(\theta^\intercal \sigma(\mathrm{W}_x^\intercal X + b_x); \phi^\intercal \sigma(\mathrm{W}_y^\intercal Y + b_y)\big).\]

\end{corollary}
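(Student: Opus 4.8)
The plan is to reduce the statement to the arbitrary-linear-processing case (Case 1) of Proposition \ref{PROP:SMI_DPI} by conditioning on the hidden-layer parameters. First I would fix the weight matrices $\mathrm{W}_x,\mathrm{W}_y$ and bias vectors $b_x,b_y$, and introduce the nonlinear feature vectors $U:=\sigma(\mathrm{W}_x^\intercal X+b_x)$ and $V:=\sigma(\mathrm{W}_y^\intercal Y+b_y)$. Since $\sigma$ acts coordinatewise and is continuous, $(U,V)$ is a well-defined pair of random vectors in $\RR^{k_x}\times\RR^{k_y}$, where $k_x,k_y$ are the (arbitrary) hidden dimensions, and its joint law is the pushforward of $P_{X,Y}$ through $(x,y)\mapsto\big(\sigma(\mathrm{W}_x^\intercal x+b_x),\sigma(\mathrm{W}_y^\intercal y+b_y)\big)$. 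With these parameters held fixed, $X_\mathsf{nn}=\mathrm{A}_x U$ and $Y_\mathsf{nn}=\mathrm{A}_y V$ are exactly (bias-free) linear processings of $U$ and $V$.

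The key step is to invoke Proposition \ref{PROP:SMI_DPI}, Case 1, with $(X,Y)$ replaced by $(U,V)$. The middle equality in \eqref{EQ:noDPI} gives
\begin{equation*}
\sup_{\mathrm{A}_x,\mathrm{A}_y}\s{SI}(\mathrm{A}_x U;\mathrm{A}_y V)=\sup_{\theta,\phi}\s{I}(\theta^\intercal U;\phi^\intercal V),
\end{equation*}
where the suprema over $\mathrm{A}_x,\mathrm{A}_y$ run over all matrices of the appropriate (arbitrary) output dimension and $(\theta,\phi)$ range over $\bS^{k_x-1}\times\bS^{k_y-1}$. I would stress that the proposition applies verbatim to the law of $(U,V)$ because it is stated for a generic joint distribution and imposes no constraint on the shapes of $\mathrm{A}_x,\mathrm{A}_y$; in particular the arbitrariness of the hidden and output dimensions causes no difficulty.

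Finally I would take the supremum over $\mathrm{W}_x,\mathrm{W}_y,b_x,b_y$ on both sides and use that nested suprema collapse, so that $\sup_{\mathrm{W},b}\sup_{\mathrm{A}}=\sup_{\mathrm{W},b,\mathrm{A}}$ on the left and $\sup_{\mathrm{W},b}\sup_{\theta,\phi}=\sup_{\theta,\phi,\mathrm{W},b}$ on the right. Unwinding the definitions of $U$ and $V$ on each side then yields precisely the claimed identity between $\sup_{\mathrm{A}_x,\mathrm{A}_y,\mathrm{W}_x,\mathrm{W}_y,b_x,b_y}\s{SI}(X_\mathsf{nn};Y_\mathsf{nn})$ and $\sup_{\theta,\phi,\mathrm{W}_x,\mathrm{W}_y,b_x,b_y}\s{I}\big(\theta^\intercal\sigma(\mathrm{W}_x^\intercal X+b_x);\phi^\intercal\sigma(\mathrm{W}_y^\intercal Y+b_y)\big)$.

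The substantive content is entirely inherited from Proposition \ref{PROP:SMI_DPI}, so there is no genuine obstacle: the argument is conditioning plus bookkeeping. The only points that warrant a line of care are that the proposition must be applied to the conditional (pushforward) law of the features $(U,V)$ for each fixed parameter choice, rather than to $(X,Y)$ directly, and that the monotonicity of $\sigma$ is not actually needed for this particular equality---continuity alone suffices to define $U,V$ and to carry out the reduction. Monotonicity (and the sigmoid/tanh examples) would become relevant only if one additionally wanted the feature maps to be injective, which plays no role in establishing the displayed identity.
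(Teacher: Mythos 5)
Your proposal is correct and takes essentially the same route as the paper's proof: fix $\mathrm{W}_x,\mathrm{W}_y,b_x,b_y$, reduce to the arbitrary-linear-processing result applied to the feature vectors $\sigma(\mathrm{W}_x^\intercal X+b_x)$ and $\sigma(\mathrm{W}_y^\intercal Y+b_y)$, and then take the supremum over the hidden-layer parameters. The only difference is cosmetic---the paper re-derives the linear-case upper bound and achievability sequence inline rather than citing Proposition~\ref{PROP:SMI_DPI} as a black box---and your observation that monotonicity of $\sigma$ is not actually used is consistent with the paper's argument.
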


\begin{figure}[h]
\centering
\includegraphics[width=0.49\textwidth]{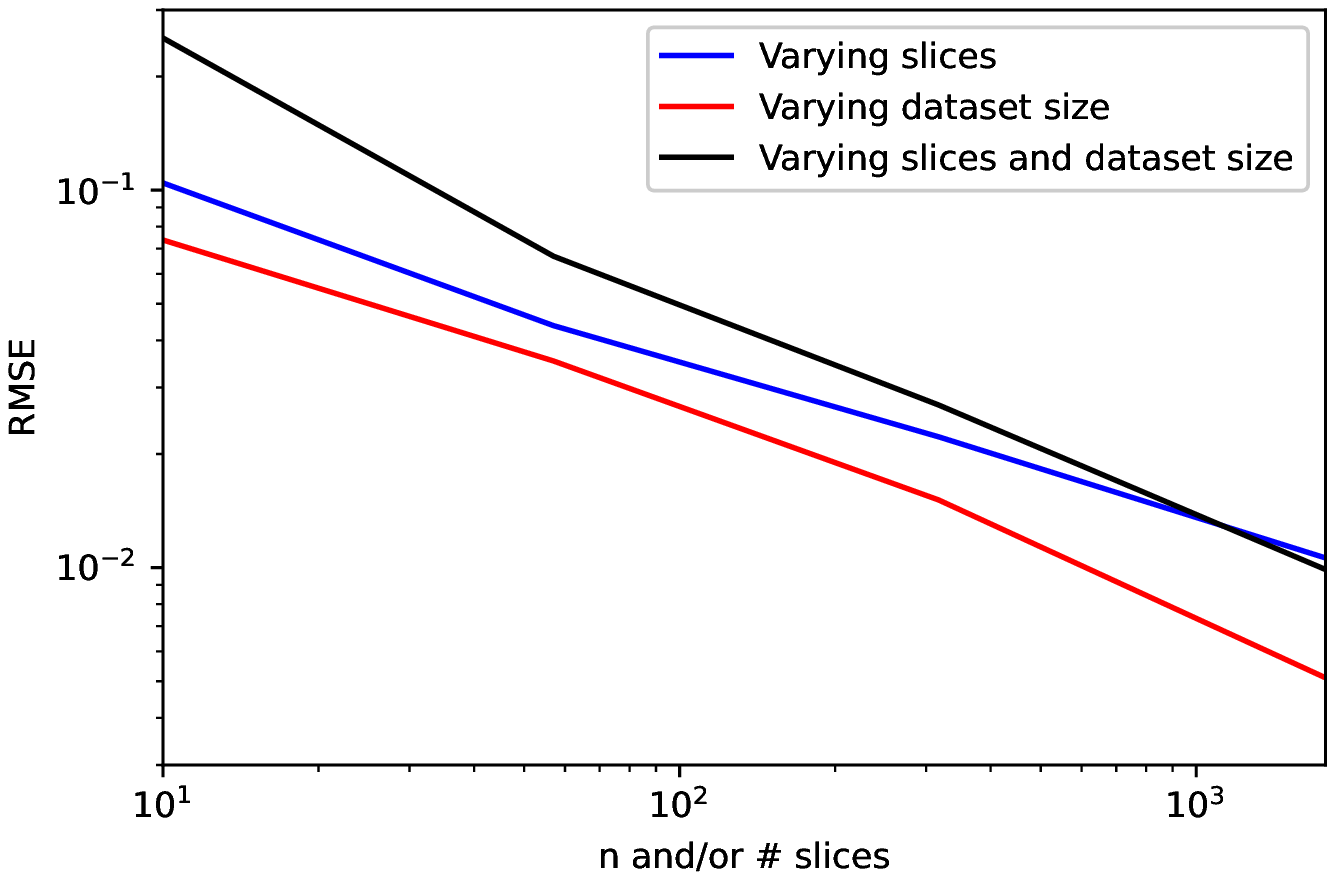}
\includegraphics[width=0.49\textwidth]{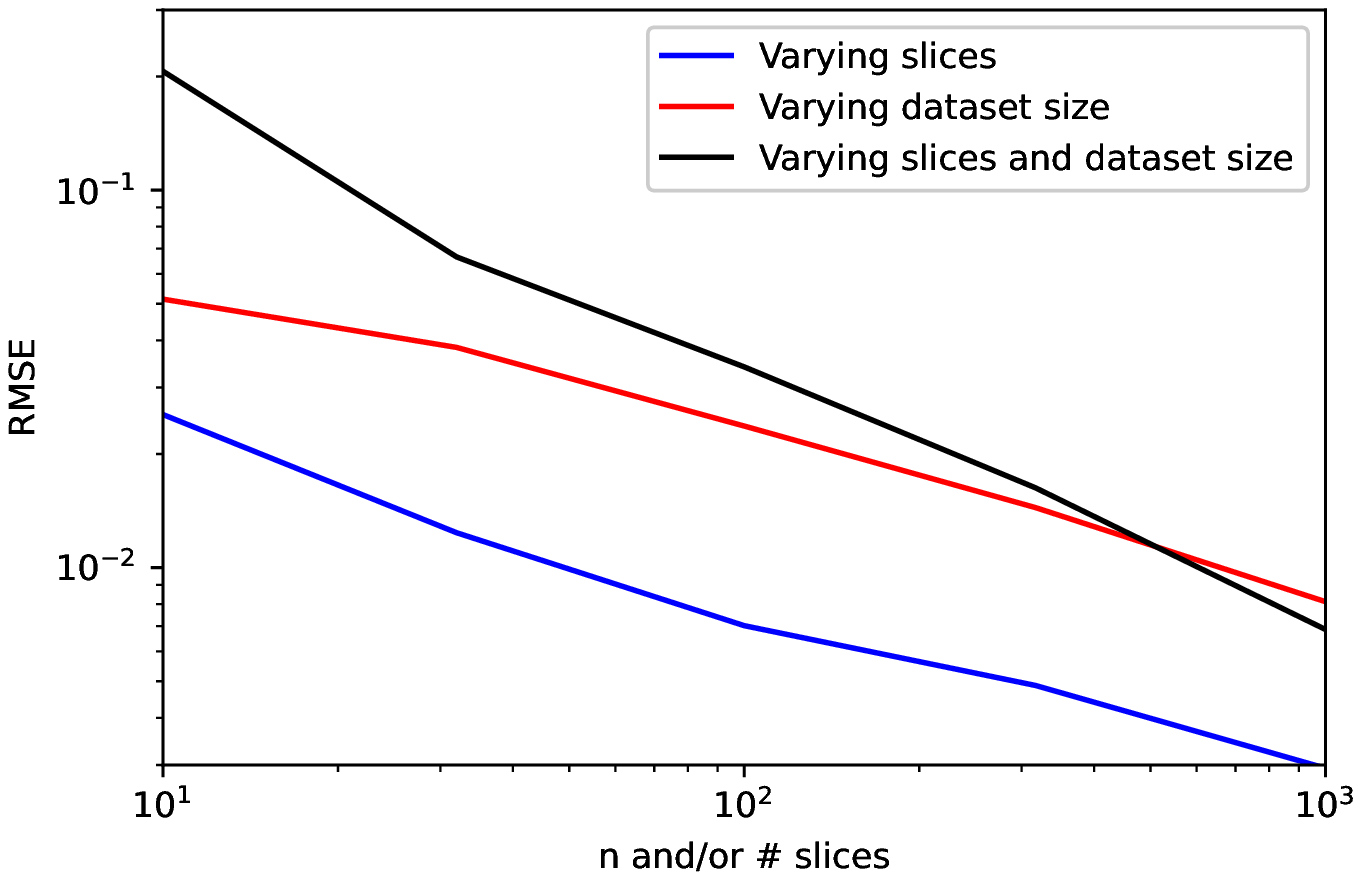}\vspace{3mm}\\
\includegraphics[width=0.5\textwidth]{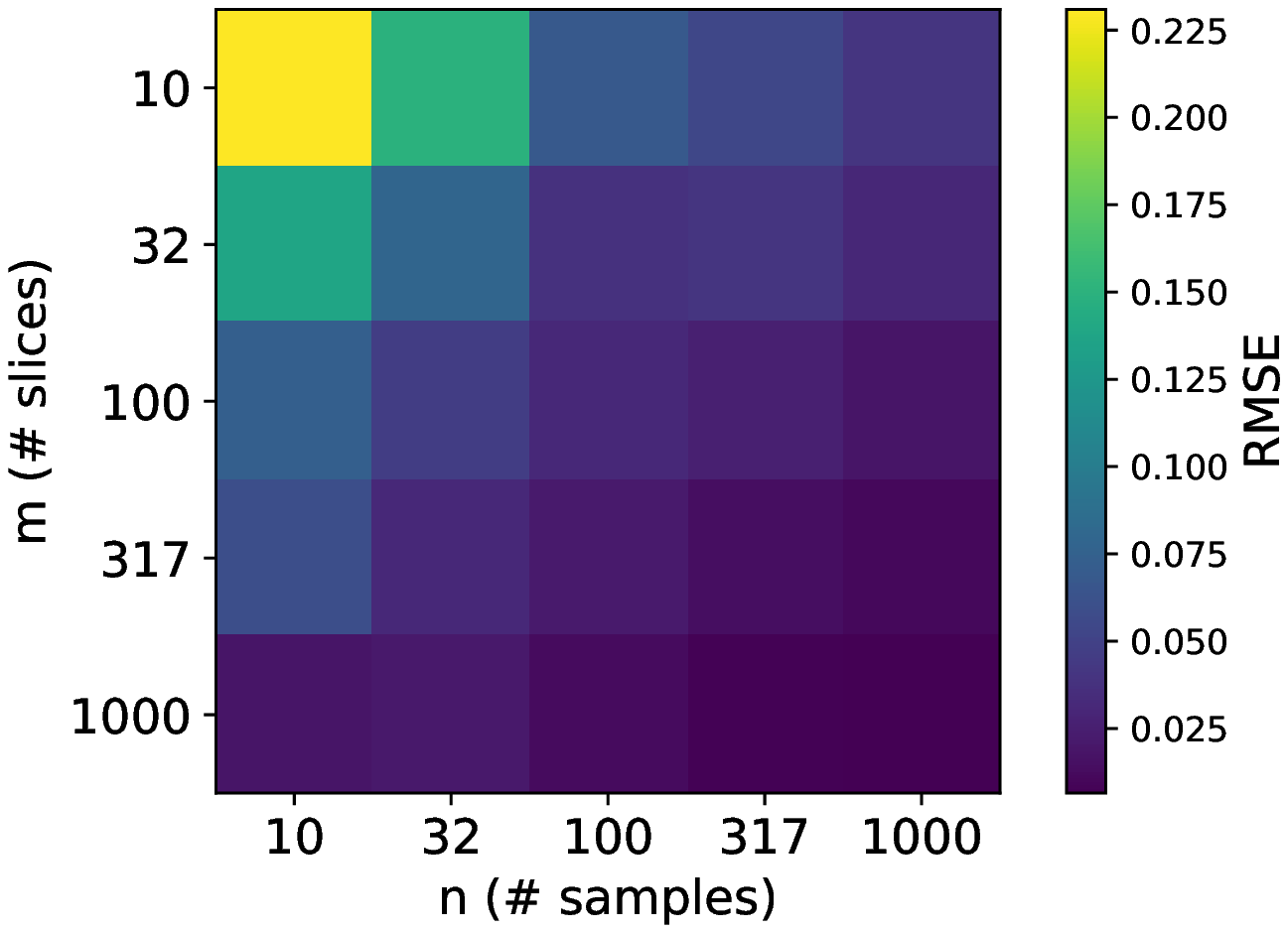}\vspace{2mm}
\caption{Convergence of the SMI estimator versus the number of data samples $n$~and/or slice samples $m$: (a) $d=3$, $m_{\mathrm{fixed}}=n_{\mathrm{fixed}}$ $=10^4$; (b) $d=10$, $m_{\mathrm{fixed}}=n_{\mathrm{fixed}}=10^4$; (c) $d=10$, $m,n$ varied independently.}\label{fig:Conv}
\end{figure}

\section{Empirical Results}

\subsection{Convergence of the SMI estimator}

We validate the empirical convergence rates for SMI estimation derived in Section \ref{SUBSEC:estimation}. Consider densities with smoothness parameter $s=2$ in the setup of Corollary \ref{CORR:effective_rate}; the expected convergence rate (up to log factors) is $n^{-1/2}+m^{-1/2}$. While the theoretical results use the optimal estimator of \cite{han2017optimal} to obtain the tightest bounds, in our experiments we implemented the simpler Kozachenko–Leonenko estimator. The justification for doing so comes from \cite{jiao2018nearest}, who showed that this estimator achieves the same rate (up to log factors) as the optimal one from \cite{han2017optimal}. 

Figure \ref{fig:Conv} shows convergence of the estimated SMI RMSE for the case where $X$ and $Y$ are overlapping subsets of a standard normal random vector $Z\sim\cN(0,\mathrm{I}_{15})$. For $d=3$, we set $X =Z_{[1:3]}:=(Z_1\ Z_2\ Z_3)^\intercal$, $Y =Z_{[2:4]}$ (i.e., 2~coordinates overlap). For $d=10$, we take $X = Z_{1:10}$, $Y = Z_{5:15}$ (5 coordinates overlap). Convergence is shown when both $n$ and $m$ grow together (i.e., $n=m$), and when one is fixed to a large value and the other varies. The large value is chosen so that the error term corresponding to the fixed parameter is negligible compared to the varying term. For $d=10$, we also plot results for $m,n$ varying independently. 
Appendix \ref{supp:MIest} provides corresponding MI estimation results.

\subsection{Independence testing}
Proposition \ref{PROP:SMI_prop} states that $\s{SI}(X;Y) = 0$ if and only if $X$ and $Y$ are independent. This implies that, like MI, we may use SMI for independence testing. MI-based independence tests of high-dimensional continuous variables can be burdensome due to slow convergence of MI estimation \cite{han2017optimal}. 
We show that, as our theory implies, SMI is a scalable alternative. 

Figure \ref{fig:IndepTest} shows independence testing results for a variety of relationships between $X$, $Y$ pairs. The figure shows the area under the curve (AUC) of the receiver operating characteristic (ROC) for independence testing via SMI (or MI) thresholding as a function of the number of samples $n$ from the joint distribution.\footnote{For every $n$, we generate 50 datasets comprising $n$ positive samples (i.e., drawn from the joint distribution) and 50 more dataset of negative samples in each setting. SMI and MI are then estimated over each dataset, the ROC curve is found, and the area under it computed.  
The ROC curve plots test performance (precision and recall) as the threshold is varied over all possible values. The AUC ROC quantifies the test's discriminative ability: an omniscient classifier has AUC ROC 1, while random tests have~AUC ROC 0.5.} Both the SMI and MI were computed using the Kozachenko–Leonenko estimator \cite{kraskov2004estimating}; the MC step for SMI estimation (see \eqref{EQ:SMI_est}) uses 1000 random slices, and the AUC ROC curves are computed from 100 random trials. 
The joint distribution of $(X,Y)$ in each case of Figure \ref{fig:IndepTest} is:
\begin{enumerate}[leftmargin=.55cm, label=(\alph*)]
    \item \textbf{One feature (linear):} $X,Z \sim \mathcal{N}(0,\mathrm{I}_d)$ i.i.d. and $Y = \frac{1}{\sqrt{2}}\big (\frac{1}{\sqrt{d}}(\mathbf{1}^\intercal X)\mathbf{1}+Z\big)$, where $\mathbf{1}:=(1\ldots 1)^\intercal\in\RR^d$.
    \item \textbf{One feature (sinusoid):} $X,Z \sim \mathcal{N}(0,\mathrm{I}_d)$ i.i.d. and $Y = \frac{1}{\sqrt{2}}\big (\frac{1}{\sqrt{d}}\sin(\mathbf{1}^\intercal X)\mathbf{1}+Z\big)$.
    \item \textbf{Two features:} $X,Z \sim \mathcal{N}(0,\mathrm{I}_d)$ i.i.d. and
    $Y_i=\frac{1}{\sqrt{2}}\begin{cases}\frac{1}{d}(\mathbf{1}_{\lfloor d/2\rfloor} 0\ldots 0)^\intercal X+Z_i,&\mspace{-5mu} i\leq \frac{d}{2}\\
    \frac{1}{d}(0\ldots 0 \mathbf{1}_{\lceil d/2\rceil})^\intercal X+Z_i,&\mspace{-5mu} i> \frac{d}{2}.
    \end{cases}$
    \item \textbf{Low rank common signal:} $Z_1,Z_2 \sim \mathcal{N}(0,\mathrm{I}_d)$ and  $V \sim \mathcal{N}(0,\mathrm{I}_2)$ are independent; $X = \mathrm{P}_1 V + Z_1$ and $Y = \mathrm{P}_2 V + Z_2$, where $\mathrm{P}_1, \mathrm{P}_2\in\RR^{d\times 2}$ are projection matrices (realized at the beginning of each iteration by drawing i.i.d. standard normal entries). 
    \item \textbf{Independent coordinates:} $X,Z\sim \mathcal{N}(0,\mathrm{I}_d)$ i.i.d. and $Y = \frac{1}{\sqrt{2}}(X+Z)$.
\end{enumerate}
Note that in all the cases with underlying lower-dimensional structure, SMI scales well with dimension while MI does not; in the independent case of subfigure (d), both perform similarly and rather well.  
While the SMI is always on par or better than MI in these experiments, the results suggest that SMI performs best in structured (specifically, low rank) settings. This is because in these settings the MI term associated with random $(\Theta, \Phi)$ slices has lower variance. This is not the case for the unstructured setting of Figure \ref{fig:IndepTest}(d). There, when dimension is high, random slices carry relatively little information compared to the maximum MI over slices (which attained between the $i$th coordinates of $X$ and $Y$ for any $i$). Since SMI is an average quantity, in this case it offers little gain over classic MI. 

\begin{figure}[!t]
\centering
\begin{subfigure}[b]{.175\textwidth}
\begin{center}
\includegraphics[width=1.15in]{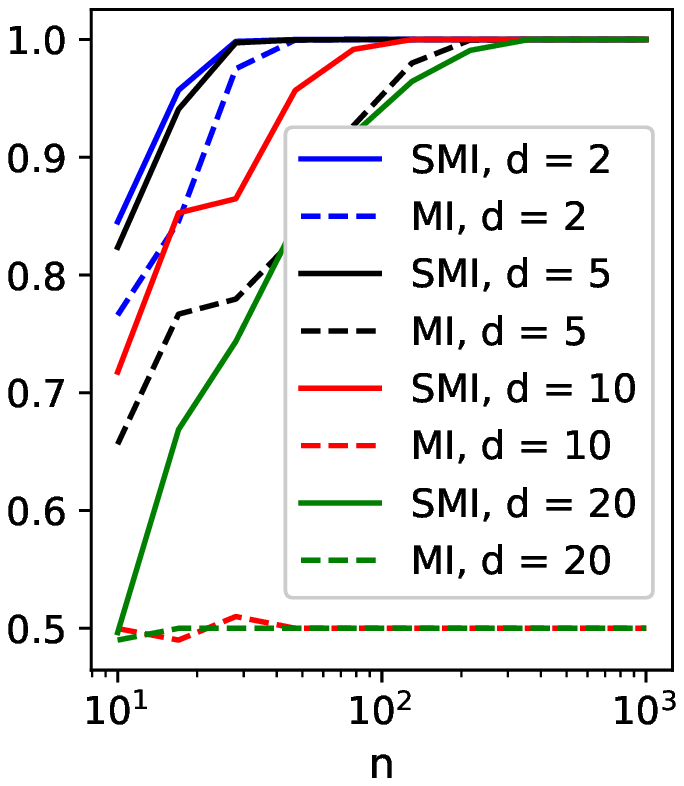}
\end{center}
\caption{$Y$ encodes single $X$ feature (lin)}
\end{subfigure}\ \ \ \ 
\begin{subfigure}[b]{.175\textwidth}
\begin{center}
\includegraphics[width=1.15in]{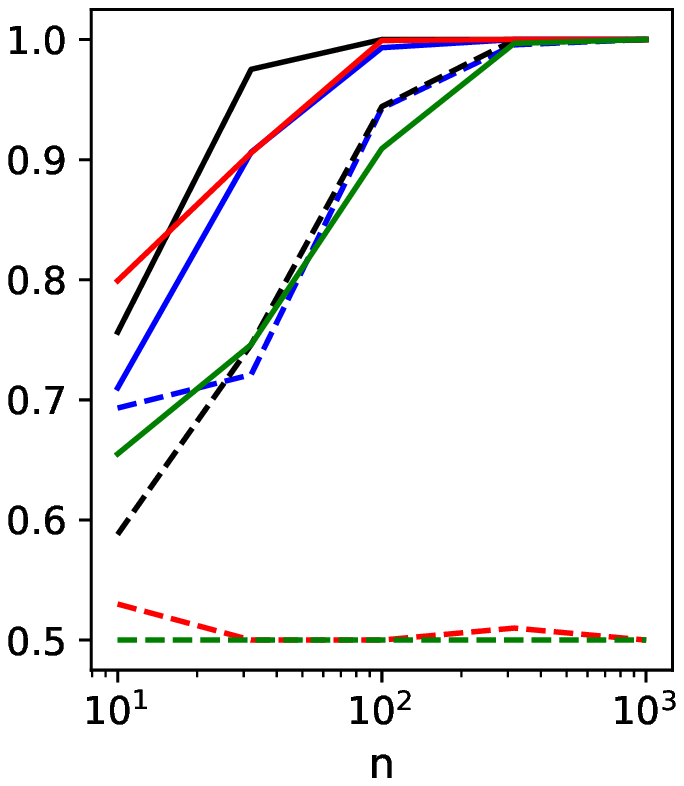}
\end{center}
\caption{$Y$ encodes single $X$ feature (sin)}
\end{subfigure}\ \ \ \ 
\begin{subfigure}[b]{.175\textwidth}
\begin{center}
\includegraphics[width=1.15in]{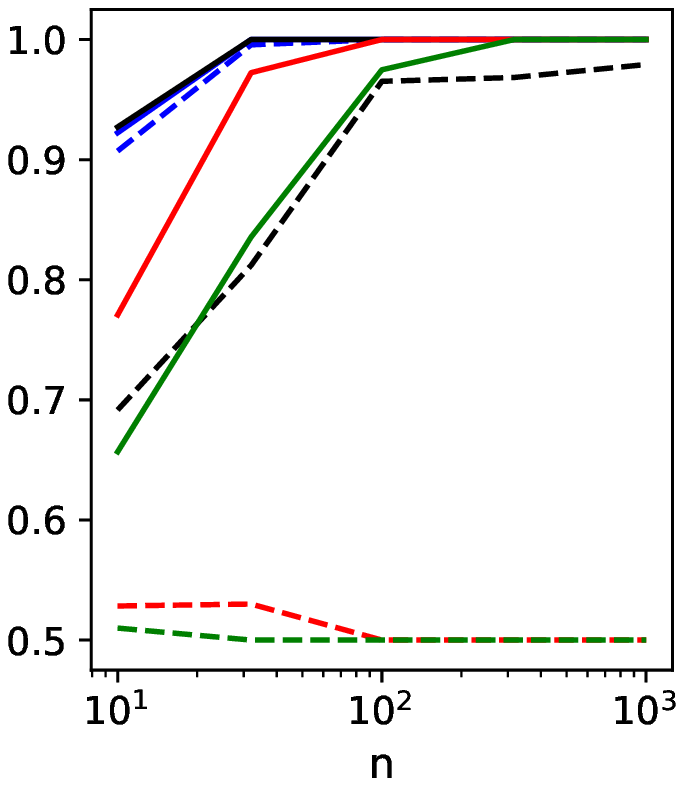}
\end{center}
\caption{$Y$ encodes two features of $X$}
\end{subfigure}\ \ \ \ 
\begin{subfigure}[b]{.175\textwidth}
\begin{center}
\includegraphics[width=1.15in]{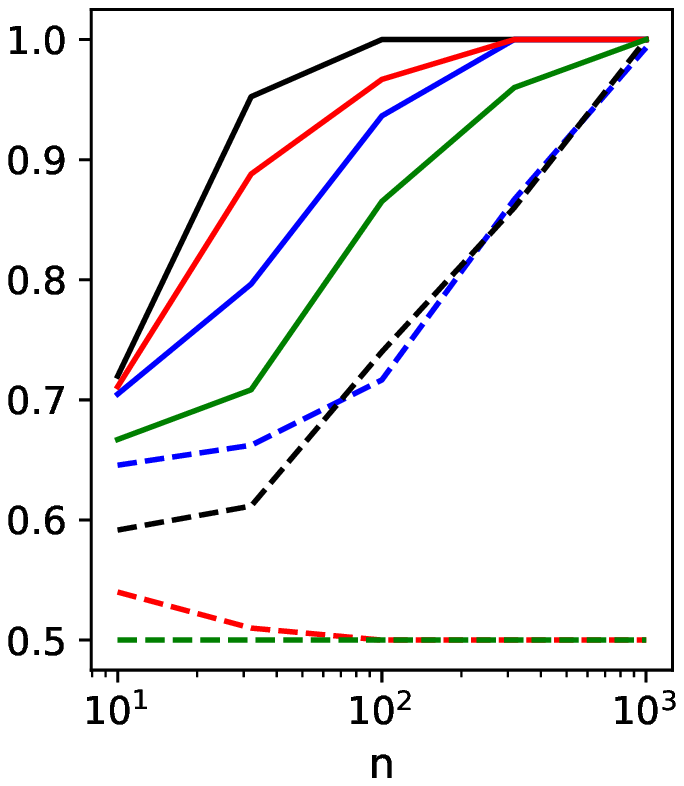}
\end{center}
\caption{Low rank common signal}
\end{subfigure}\ \ \ \ 
\begin{subfigure}[b]{.175\textwidth}
\begin{center}
\includegraphics[width=1.15in]{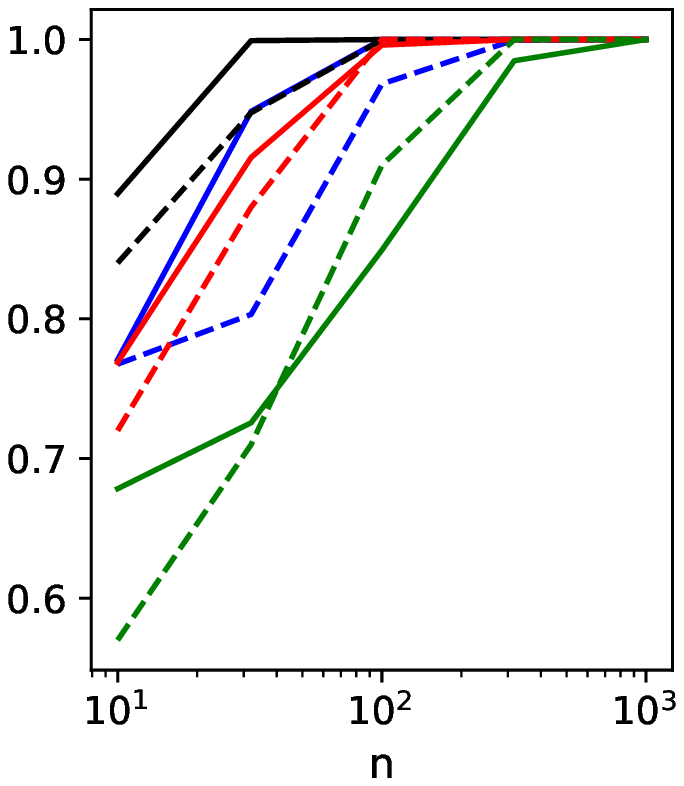}
\end{center}
\caption{Independent coordinates }
\end{subfigure}
\caption{Statistical efficiency of independence testing with dimension. The plot show the area under the ROC curve (AUC ROC) as a function of samples $n$ for several dimensions $d$. The test is based on thresholding SMI and MI. Details for each scenario are in the text. 
}
\label{fig:IndepTest}
\end{figure}

\subsection{Feature extraction}\label{SUBSEC:feature_extract}

In the above, we focused on a nonparametric estimator for which we derived tight bounds. In practice, applying neural estimators ({\`a} la MINE \cite{belghazi2018mine}) is more compatible with modern optimizers. 
The SMI neural estimator (S-MINE) relies on the variational representation from Proposition \ref{PROP:SMI_DV}. Given a sample set $(X_1,Y_1),\ldots,(X_n,Y_n)$ i.i.d. from $P_{X,Y}$, we further sample $n$ i.i.d. copies of $(\Theta,\Phi)\sim\mathsf{Unif}(\bS^{d_x-1})\times\mathsf{Unif}(\bS^{d_y-1})$. The negative samples $(\tilde{X}_1,\tilde{Y}_1),\ldots,(\tilde{X}_n,\tilde{Y}_n)$ are obtained by permuting the order of, e.g., the $Y$ samples. Parametrizing the potential $g$ in \eqref{EQ:SMI_DV} by a neural network $g_\xi$ with parameters $\xi\in\Xi$, we obtain the following empirical objective
\[
\sup_{\xi\in\Xi} \frac{1}{n}\sum_{i=1}^ng_\xi(\Theta_i,\Phi_i,\Theta_i^\intercal X_i,\Phi_i^\intercal Y_i)-\log\left(\frac{1}{n}\sum_{i=1}^ne^{g_\xi(\Theta_i,\Phi_i,\Theta_i^\intercal\tilde{X}_i,\Phi_i^\intercal\tilde{Y}_i)}\right).
\]
This provides an estimate (from below) of $\s{SI}(X;Y)$. We leave a full theoretical and empirical exploration of its performance for future work, and here only provide two proof-of-concept experiments.

The formulation of SMI as an optimization of a loss whose gradients are readily evaluated lends itself to the SMI-extraction formulations of Proposition \ref{PROP:SMI_DPI}, i.e., we can simultaneously optimize $\mathrm{A}_x$, $\mathrm{A}_y$, and $\xi$ end-to-end. Figure \ref{Fig:SMI} shows results for $X,Z \sim \mathcal{N}(0,\mathrm{I}_{10})$, $Y = e_1^\intercal X + Z$, where $e_1 = (1\ 0 \ldots 0)^\intercal$, with $g_\xi$ in S-MINE realized as a two-layer fully connected neural network with 100 hidden units for. An $(\mathrm{A}_x^\star,\mathrm{A}_y^\star)$ with rows converging to $e_1$ are recovered. Note that while Proposition~\ref{PROP:SMI_DPI} identifies an optimal solution where only the first row is nonzero, this will have the same SMI as when all rows equal that first row. The latter solution is found since the gradients do not favor one row over another.

\begin{figure}[h]
\centering
\hspace{-1.5mm}\includegraphics[width=0.355\textwidth]{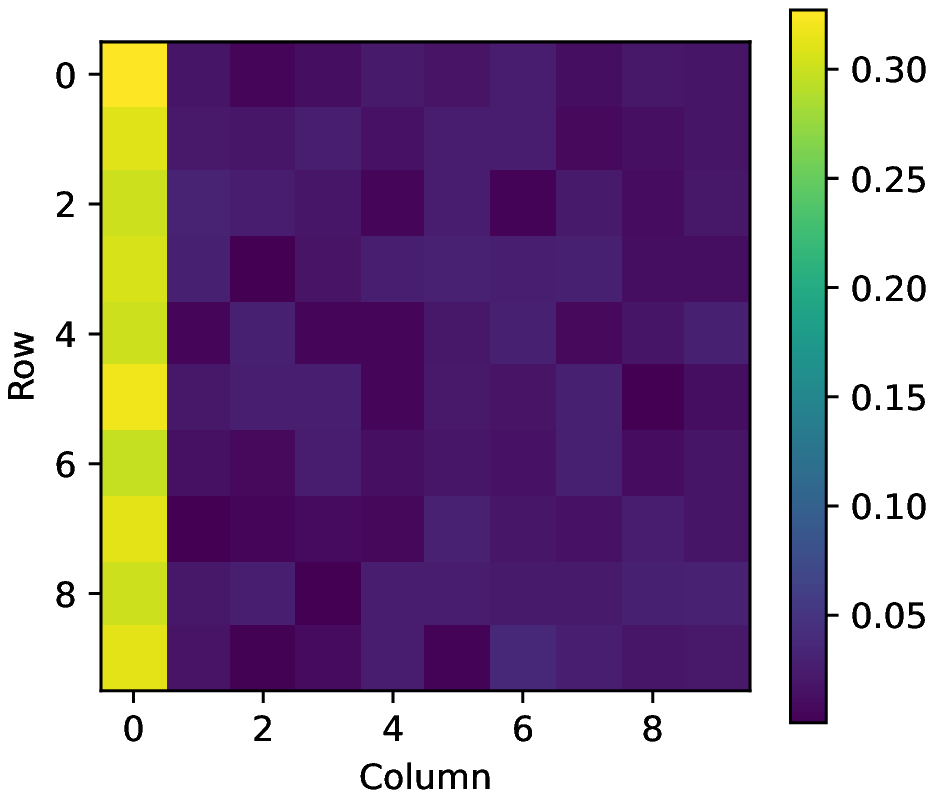}\hspace{-1mm}
\includegraphics[width=0.355\textwidth]{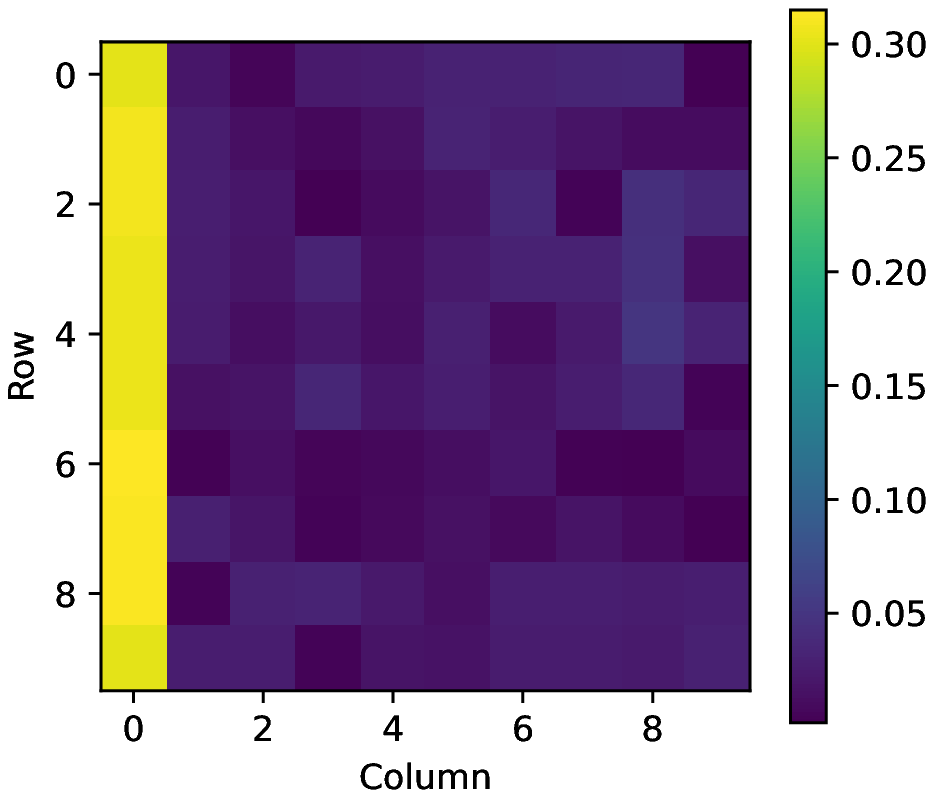}
\caption{$(\mathrm{A}_x^\star,\mathrm{A}_y^\star)$ optimizers for \eqref{EQ:noDPI} over a Gaussian dataset using S-MINE: (a) matrix $\mathrm{A}_x^\star$; (b) matrix $\mathrm{A}_y^\star$. Both have correctly recovered the unit vector $e_1$ as their rows.}\label{Fig:SMI}
\centering
\vspace{2mm}
\hspace{-5.5mm}\includegraphics[width=0.385\textwidth]{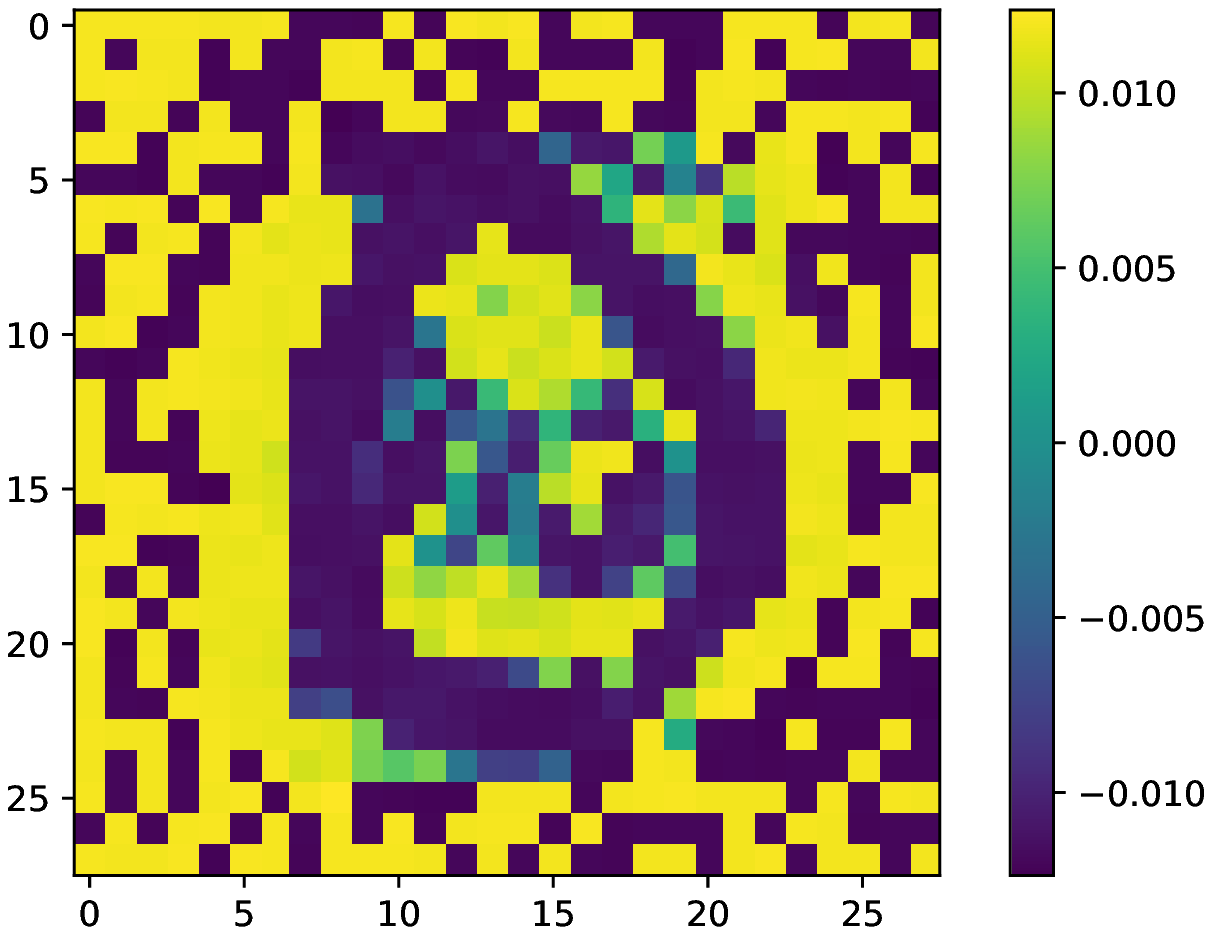}\hspace{-5mm}
\includegraphics[width=0.385\textwidth]{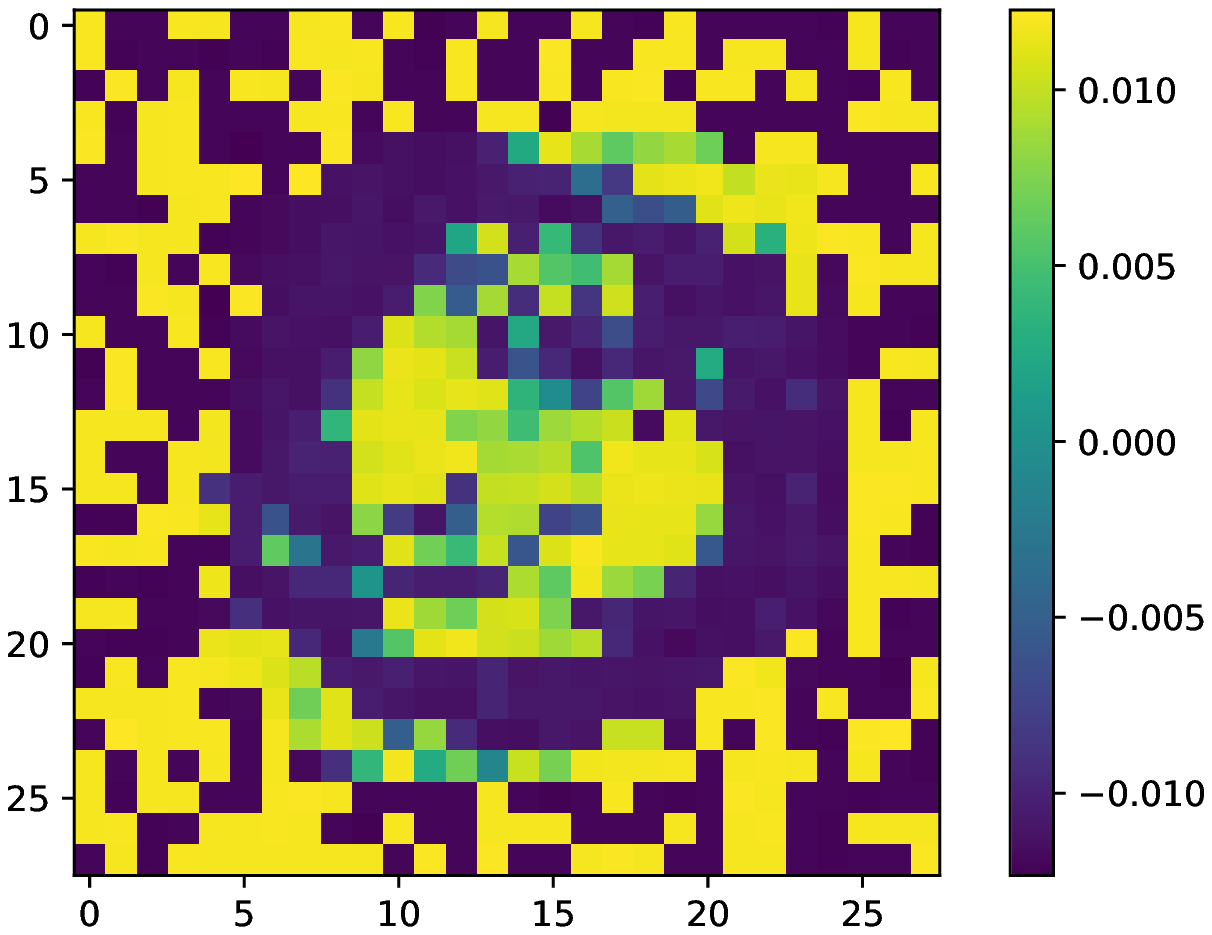}
\caption{Solution for optimized $\mathrm{A}$-transformed SMI of the 0-1 MNIST setup using S-MINE. Rows 0 and 1 of $\mathrm{A}$ are shown in (a) and (b), respectively.}\label{Fig:MNIST}
\end{figure}


We next combine S-MINE with independence testing, looking to maximize the SMI using transformations $\mathrm{A} X, \mathrm{A} Y$, where $X,Y$ are samples from a random MNIST class (either 0 or 1) and $\mathrm{A} \in \mathbb{R}^{10\times d}$. Specifically, we choose a class $C\sim\mathsf{Ber}(0.5)$ and then sample $X$ and $Y$ uniformly from that class' training dataset. In this setup, $X,Y$ share up to 1 bit of information, i.e. $\s{I}(X;Y) \leq \s{H}(C)=  \log 2$. This suggests that maximizing the SMI between $\mathrm{A} X$ and $\mathrm{A}Y$ will find an $\mathrm{A}$ that extracts information about $C$, revealing whether $X$ and $Y$ are in the same class. Optimizing $\mathrm{A}$ yields an estimated SMI of 0.680 bits (compare this to, e.g., 0.0752 SMI achieved by a matrix  $\mathrm{A}$ with i.i.d. standard normal entries). To confirm $\mathrm{A}$ is not being overfit, we also optimized the SMI over $\mathrm{A}$ when $X,Y$ are drawn independently, i.e., no longer sharing a class, yielding 0.0289 estimated SMI (the ground truth is 0 here). These results indicate that (a) an SMI-based independence test would be successful at detecting dependence between $X,Y$ and (b) optimizing $\mathrm{A}$ not only succeeds at significantly (almost 10x) increasing the SMI, but also comes relatively close to the true MI of 1 bit. Heatmaps of two rows of $\mathrm{A}$ rearranged into the MNIST image dimensions are shown in Figure \ref{Fig:MNIST}. Observe the $\mathrm{A}_{i:}$ visually correspond to the numeral 0, which, from a matched filter perspective, yields an $\mathrm{A}X$ (resp., $\mathrm{A}Y$) that is informative of whether $X$ (resp., $Y$) is in class zero or not. This, in turn, conveys information of whether $X,Y$ share a class (since 0 and 1 are the only options), as desired.


\section{Summary and Concluding Remarks}


Motivated to address the computational and statistical unscalability of MI to high dimensions, this paper proposed an alternative notion of informativeness dubbed sliced mutual information (SMI). SMI projects high-dimensional random variables to scalars and averages over random projections.~We showed that SMI shares many of the structural properties of classic MI, while enjoying efficient empirical estimation. We also showed that, as opposed to classic MI, SMI can be increase by~processing the variables. This observation was quantified for linear and nonlinear SMI-based feature extractors. Experiments validating the theoretical study were provided, demonstrating dimension free empirical convergence rates, statistical efficiency for independence testing, and feature extraction examples.

Our results pose SMI as a favorable figure of merit for information quantification between high-dimensional random variables. We expect it to turn useful for a variety of applications in inference and machine learning, although a large-scale empirical exploration is beyond the scope of this paper. In particular, SMI seems well adapted for representation learning via the (sliced) InfoMax principle \cite{linsker1988self,bell1995information}, and we plan to test this hypothesis in future work. On the theoretical side, appealing future directions are abundant, including convergence guarantees for the neural SMI estimator used in Section \ref{SUBSEC:feature_extract}, operational channel/source coding settings for which SMI characterizes the information-theoretic fundamental limit, and a statistical analysis of SMI-based independence testing. 


\section*{Acknowledgements}


Z. Goldfeld is supported by the NSF CRII Grant CCF-1947801, the NSF CAREER Award under Grant CCF-2046018, and the 2020 IBM Academic Award.


\bibliographystyle{unsrt}
\bibliography{ref}

\appendix

\section{Proofs}

\subsection{Proof of Proposition \ref{PROP:SMI_prop}}\label{SUBSEC:SMI_prop_proof}

\paragraph{Proof of \ref{prpty:NonNeg}.} $\s{SI}(X;Y)\geq 0$ is trivial by non-negativity of conditional MI. For the equality to zero case, recall that $X$ and $Y$ are independent if and only if (iff) their joint characteristic function $\varphi_{X,Y}(t,s):=\EE\left[e^{itX+isY}\right]$ decomposes into a product, i.e., 
\[\varphi_{X,Y}(t,s)=\varphi_{X}(t)\varphi_Y(s)=\EE\left[e^{itX}\right]\EE\left[e^{isY}\right],\quad \forall t,s\in\RR.\]
Also recall that independence is equivalent to zero classic mutual information. Denote $X_\theta:=\theta^\intercal X$ and $Y_\phi:=\phi^\intercal Y$ and observe that $\s{SI}(X;Y)=0$ is equivalent to
\begin{equation}
\oint_{\bS^{d_x-1}}\oint_{\bS^{d_y-1}}\s{I}(X_\theta;Y_\phi)\dd\theta\dd\phi=0.\label{EQ:int_zero}
\end{equation}
Indeed, as $\s{I}(X_\theta;Y_\phi)\geq 0$, for any $(\theta,\phi)\in\bS^{d_x-1}\times \bS^{d_y-1}$, \eqref{EQ:int_zero} holds iff \[\varphi_{X_\theta,Y_\phi}(t,s)=\varphi_{X_\theta}(t)\varphi_{Y_\phi}(s),\quad \forall
t,s\in\RR,\]
but this is the same as
\[\varphi_{X,Y}(t\theta,s\phi)=\varphi_X(t\theta)\varphi_Y(s\phi),\quad \forall
t,s\in\RR,\,\theta\in\bS^{d_x-1},\, \phi\in\bS^{d_y-1}.\]
Changing variables $t'=t\theta$ and $s'=s\phi$, the last equality holds iff 
\[\varphi_{X,Y}(t',s')=\varphi_X(t')\varphi_Y(s'),\quad \forall
t'\in\RR^{d_x},\,s'\in\RR^{d_y},\]
which means $X$ and $Y$ are independent.

\paragraph{Proof of \ref{prpty:Bounds}.} Since SMI is an average of projected MI terms we immediately have
\[
\inf_{\theta\in \bS^{d_x-1}, \phi\in \bS^{d_y - 1}} \s{I}(\theta^\intercal  X; \phi^\intercal  Y)\leq\s{SI}(X;Y) \leq \sup_{\theta\in \bS^{d_x-1}, \phi\in \bS^{d_y - 1}} \s{I}(\theta^\intercal  X; \phi^\intercal  Y).
\]
By the DPI for classic MI we further upper bound the right-hand side (RHS) by $\s{I}(X;Y)$.

We further note that the infimum in the lower bound is always attained, as is thus a minimum. This is because for any $(\theta_n,\phi_n),(\theta,\phi)\in\bS^{d_x-1}\times \bS^{d_y-1}$ with $\theta_n\to\theta$ and $\phi_n\to\phi$, we have that $(\theta_n^\intercal X,\phi_n^\intercal Y)$ converge to $(\theta^\intercal X,\phi^\intercal Y)$ almost surely (in fact, surely) and therefore in distribution. Since MI is weakly lower semicontinuous, it attains a minimum on the compact set $\bS^{d_x-1}\times \bS^{d_y-1}$. To attain the supremum one must impose additional regularity on the Lebesgue density of $P_{X,Y}$ to ensure that MI is continuous in the weak topology; see, e.g., \cite[Theorem 1]{Godavarti2004convergence}.

\paragraph{Proof of \ref{prpty:KL}.} This follows because conditional mutual information can be expressed as \[\s{I}(X;Y|Z)=\EE_Z\Big[\s{D}_{\s{KL}}\big(P_{X,Y|Z}(\cdot|Z)\big\|P_{X|Z}(\cdot|Z)\otimes P_{Y|Z}(\cdot|Z)\big)\Big],\]
and because the joint distribution of $(\Theta^\intercal X,\Phi^\intercal Y)$ given $\{\Theta=\theta,\Phi=\phi\}$ is $(\pi^{\theta},\pi^{\phi})_\sharp P_{X,Y}$, while the corresponding conditional marginals are $\pi^{\theta}_\sharp P_X$ and $\pi^{\phi}_\sharp P_Y$, respectively.

\paragraph{Proof of \ref{prpty:Chain}.} We only prove the small chain rule; generalizing to $n$ variables is straightforward.~Consider:
\begin{align*}
\s{SI}(X,Y|Z)&=\s{I}(\Theta^\intercal X,\Phi^\intercal Y;\Psi^\intercal Z|\Theta,\Phi,\Psi)\\
&=\s{I}(\Theta^\intercal X;\Psi^\intercal Z|\Theta,\Phi,\Psi)+\s{I}(\Phi^\intercal Y;\Psi^\intercal Z|\Theta,\Phi,\Psi,\Theta^\intercal X),    
\end{align*}
where the last equality is the regular chain rule. Since $(X,Z,\Theta,\Psi)$ are independent of $\Phi$, we have
\[\s{I}(\Theta^\intercal X;\Psi^\intercal Z|\Theta,\Phi,\Psi)=\s{I}(\Theta^\intercal X;\Psi^\intercal Z|\Theta,\Psi)=\s{SI}(X;Z).\]
We conclude the proof by noting that
\[\begin{split}
\s{I}(\Phi^\intercal Y;\Psi^\intercal Z|\Theta,\Phi,\Psi,\Theta^\intercal X)&=\frac{1}{S_{d_x-1}}\oint_{\bS^{d_x-1}}\s{I}(\Phi^\intercal Y;\Psi^\intercal Z|\Theta=\theta,\Phi,\Psi,\theta^\intercal X)\dd\theta\\
&=\frac{1}{S_{d_x-1}}\oint_{\bS^{d_x-1}}\s{I}(\Phi^\intercal Y;\Psi^\intercal Z|\Phi,\Psi,\theta^\intercal X)\dd\theta\\
&=\s{SI}(Y;Z|X),
\end{split}\]
where the penultimate equality is because $(X,Y,Z,\Phi,\Psi)$ are independent of $\Theta$.

\paragraph{Proof of \ref{prpty:Tensor}.} By Definition \ref{def:joint}, we have
\begin{align*}
    \s{SI}(X_1,\dots, X_n; Y_1,\dots, Y_n) = \s{I}(\Theta_1^\intercal X_1,\dots, \Theta_n^\intercal X_n; \Phi_1^\intercal Y_1,\dots, \Phi_n^\intercal Y_n| \Theta_1, \dots, \Theta_n, \Phi_1,\dots, \Phi_n),
\end{align*}
where the $\Theta_i$, $\Phi_i$ are all independent and uniform on their respective spheres. Now by mutual independence of the $\Theta_i$, $\Phi_i$ and $(X_i,Y_i)$ across $i$, 
\begin{align*}
\s{I}(\Theta_1^\intercal X_1,\dots, \Theta_n^\intercal X_n; \Phi_1^\intercal Y_1,\dots, \Phi_n^\intercal Y_n| \Theta_1, \dots, \Theta_n, \Phi_1,\dots, \Phi_n)&= \sum_{i=1}^n \s{I}(\Theta_i^\intercal X_i; \Phi_i^\intercal Y_i|\Theta_i, \Phi_i)\\
&= \sum_{i=1}^n\s{SI}(X_i;Y_i).
\end{align*}
This concludes the proof. \qed



\subsection{Maximum Sliced Entropy and Proof of Proposition \ref{PROP:SH_max}}\label{SUBSEC:SH_max_proof}
 In this section we prove the extended claim stated next, which includes Proposition \ref{PROP:SH_max} as the first item.

\begin{proposition}[Max sliced entropy]
The following max sliced differential entropy statements hold.
\begin{enumerate}[leftmargin=1cm]
        \item \textbf{Mean and covariance:} Let $\cP_1(\mu,\Sigma):=\big\{P\in\cP(\RR^d):\,\supp(P)=\RR^d\,,\,\EE_P[X]=\mu\,,\,\EE\big[(X-\mu)(X-\mu)^\intercal \big]=\Sigma\big\}$ be the class of probability measures supported on $\RR^d$ with fixed mean and covariance. Then
    \[
    \argmax_{P\in\cP_1(\mu,\Sigma)}\s{SH}(P) = \mathcal{N}(\mu,\Sigma),
    \]
    i.e. the normal distribution maximizes sliced entropy inside $\cP_1(\mu,\Sigma)$. 
    
    \item \textbf{Support contained in a ball:} Let $\cP_2(c,r):=\big\{P\in\cP(\RR^d):\,\supp(P)\subseteq \mathbb{B}_d(c,r)\big\}$ be the class of probability measures supported inside a $d$-dimensional ball centered at $c\in\RR^d$ of radius $r>0$ (denoted by $\mathbb{B}_d(c,r)$). Then
    \[
    \argmax_{P\in\cP_2(c,r)} \s{SH}(P) = \s{Unif}\big(\bS^{d-1}(c,r)\big),
    \]
    i.e. the uniform distribution on the surface of $\mathbb{B}_d(c,r)$ maximizes sliced entropy inside~$\cP_2(c,r)$.

    \item \textbf{Expected absolute deviation:} Let $\cP_3(\mu,a):=\big\{P\in\cP(\RR^d):\,\supp(P)=\RR^d\,,\,\EE_P[X]=\mu\,,\,\EE_P|\theta^T (X - \mu)| =a, \,\forall \theta \in \bS^{d-1}\big\}$ be the class of probability measures supported on $\RR^d$ with fixed mean and expected absolute deviation of the slice marginals from their mean. Then the sliced entropy inside $\cP_3$ is maximized by a $d$-dimensional symmetric multivariate Laplace distribution \cite{kotz2012laplace} with characteristic function 
    \[
    \s{\Phi}(t;\mu,b) = \frac{e^{i \mu^\intercal t}}{1 + \frac{1}{2}b t^\intercal t}.
    \]
    for some $b$ depending on $a$.
\end{enumerate}
\end{proposition}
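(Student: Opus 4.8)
The plan is to exploit that $\s{SH}(X)=\frac{1}{S_{d-1}}\oint_{\bS^{d-1}}\s{H}(\theta^\intercal X)\,\dd\theta$ is an average of one-dimensional differential entropies, so all three statements reduce to the same three-step template. First I would, for each fixed direction $\theta$, identify the scalar maximum-entropy law compatible with whatever constraint the slice $\theta^\intercal X$ inherits from membership in $\cP_i$, giving a pointwise-in-$\theta$ upper bound on $\s{H}(\theta^\intercal X)$. Second, I would exhibit a single $d$-dimensional law whose every projection is exactly that scalar maximizer, so that all the per-slice bounds are met simultaneously and their average is attained. Third, I would upgrade the per-slice equality conditions to a statement about $P$ itself via the Cram\'er--Wold device: knowing the law of $\theta^\intercal X$ for all $\theta$ pins down the characteristic function of $X$ along every ray $t=s\theta$, hence on all of $\RR^{d}$. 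Steps one and three are classical scalar max-entropy facts plus a characteristic-function uniqueness argument; step two is where the geometry of each constraint enters.

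For $\cP_1(\mu,\Sigma)$, any $P$ has slice $\theta^\intercal X$ with fixed mean $\theta^\intercal\mu$ and fixed variance $\theta^\intercal\Sigma\theta$, so $\s{H}(\theta^\intercal X)\le\frac12\log(2\pi e\,\theta^\intercal\Sigma\theta)$ with equality iff that slice is Gaussian. Since $\cN(\mu,\Sigma)$ makes every projection $\cN(\theta^\intercal\mu,\theta^\intercal\Sigma\theta)$, it attains every per-slice bound at once and therefore maximizes the average. Equality in the average forces $\theta^\intercal X$ to be Gaussian for almost every, hence by continuity of characteristic functions for every, $\theta$; and a random vector all of whose one-dimensional projections are Gaussian is jointly Gaussian, so the maximizer is exactly $\cN(\mu,\Sigma)$.

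For $\cP_3(\mu,a)$, the slice $\theta^\intercal(X-\mu)$ has fixed mean $0$ and fixed first absolute moment $a$, and the scalar max-entropy law under $\EE|Z|=a$ is the Laplace density $\propto e^{-|z|/\lambda}$ with scale $\lambda=a$. I would then verify that the symmetric multivariate Laplace with $\s{\Phi}(t;\mu,b)=e^{i\mu^\intercal t}/(1+\tfrac12 b\,t^\intercal t)$ projects correctly: putting $t=s\theta$ gives $e^{is\mu^\intercal\theta}/(1+\tfrac12 b s^2)$, the characteristic function of a scalar Laplace centred at $\mu^\intercal\theta$ with scale $\sqrt{b/2}$, whose absolute deviation equals $\sqrt{b/2}$; matching $\sqrt{b/2}=a$ fixes $b=2a^2$ and renders every slice the per-slice maximizer. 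Uniqueness follows again by Cram\'er--Wold, since equality forces each slice to be this scalar Laplace, pinning $\s{\Phi}(s\theta)$ down for all $s,\theta$ and hence identifying $P$.

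For $\cP_2(c,r)$ the per-slice upper bound is immediate: because $\supp(P)\subseteq\BB_d(c,r)$, each slice is supported in the length-$2r$ interval $[\theta^\intercal c-r,\theta^\intercal c+r]$, the scalar max-entropy law on an interval of length $\ell$ is uniform with entropy $\log\ell$, so $\s{SH}(P)\le\log(2r)$. The hard part will be the achievability step two: meeting this bound demands a single ball-supported measure whose every projection is uniform on an interval of length $2r$, i.e. a strong ``hat-box'' property, equivalent to the radial function $\sin(r\|t\|)/(r\|t\|)$ being positive definite on $\RR^{d}$. I expect this to be the crux, and I would concentrate the effort on showing that the axial projection of $\s{Unif}(\bS^{d-1}(c,r))$ is exactly the per-slice uniform maximizer and that this forces global optimality; this is clean in the case $d=3$ via Archimedes' theorem (the projection of $\s{Unif}(\bS^{2})$ is uniform), and for general $d$ one must either establish the analogous positive-definiteness for the sphere measure's projection or argue directly that no ball-supported law can do better, with the upper-bound and equality bookkeeping proceeding exactly as in the other two cases.
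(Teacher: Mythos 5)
Your treatment of Parts 1 and 3 follows essentially the same route as the paper: a per-slice scalar max-entropy bound (Gaussian under a variance constraint, Laplace under a first-absolute-moment constraint) attained simultaneously by a $d$-dimensional law all of whose projections are the scalar maximizer. Your Cram\'er--Wold uniqueness argument is a reasonable supplement the paper does not spell out, and your computation $b=2a^2$ via $\Phi(s\theta)=e^{is\mu^\intercal\theta}/(1+\tfrac12 bs^2)$ being the characteristic function of a Laplace with scale $\sqrt{b/2}$ (hence mean absolute deviation $\sqrt{b/2}=a$) actually settles the parameter conversion that the paper explicitly leaves open.

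Part 2, however, has a genuine gap, and it is exactly where you suspected. Your template requires a ball-supported law whose every projection is $\s{Unif}([-r,r])$ so as to attain the naive per-slice bound $\log(2r)$. Such a law would need characteristic function $\sin(r\|t\|)/(r\|t\|)$, which (by Schoenberg's theory) is positive definite on $\RR^d$ only for $d\leq 3$; for $d>3$ the projection of $\s{Unif}(\bS^{d-1}(0,r))$ has density proportional to $(1-t^2/r^2)^{(d-3)/2}$, which is not uniform, so the bound $\log(2r)$ is strictly unattainable and the ``attain every per-slice bound at once'' step collapses. You flag this as the crux but do not supply the replacement argument. The paper's proof is genuinely different here: it first shows any maximizer must be rotationally invariant (if $\theta^\intercal X$ and $\theta^\intercal\rU X$ differed on a positive-measure set of directions, strict concavity of entropy would make the mixture $\lambda P+(1-\lambda)\rU_\sharp P$ strictly better), then reduces a general slice to $R\,\bar\theta^\intercal\bar X$ with $R=\|(X_1,X_2,X_3)\|_2$ and $\bar X$ uniform on $\bS^2$, applies Archimedes' theorem conditionally on $R$, and runs an explicit mass-transport calculation showing that pushing the radial law of $R$ toward larger radii cannot decrease the slice entropy; radial normalization $X\mapsto rX/\|X\|_2$ then shows the boundary-supported uniform measure is optimal. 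Without some comparison argument of this kind your proposal does not prove Part 2 for $d>3$. (As a side effect, your positive-definiteness framing also exposes a delicate point at $d=2$: there the bound $\log(2r)$ \emph{is} attainable by the planar shadow of $\s{Unif}(\bS^{2}(0,r))$, whose projections are all uniform, so the circle cannot be the maximizer in that case---worth being aware of when assessing the scope of the claim.)
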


The interpretation of the $\EE_P|\theta^T (X - \mu)| =a, \,\forall \theta \in \bS^{d-1}$ constraint in 3. is as follows. Note that if the constraint were only for $\theta$s in the cardinal directions (rather than for all $\theta \in \bS^{d-1}$), the constraint could be satisfied be the product of i.i.d. Laplace distributions. Unfortunately, the product of Laplace distributions is not a spherical distribution, so the condition would not be satisfied in general for non-cardinal $\theta$. To extend to all $\theta$ on the sphere, it is necessary to find some distribution that is spherical but still has Laplace marginals, in other words, a collection of identically distributed Laplace r.v.s that are coupled such that the joint density is spherical. The Symmetric Multivariate Laplace distribution is exactly this distribution.

\begin{proof}
For any $P\in\cP(\RR^d)$ and $\theta\in\bS^{d-1}$, denote the distribution of the corresponding projection by $P_\theta:=\pi^\theta_\sharp P$. For $X\sim P$, we interchangeably write $\s{H}(X)$ and $\s{H}(P)$ for entropy (similarly, for sliced entropy), and thus express sliced entropy as
\[ \s{SH}(P)=\frac{1}{S_{d-1}}\oint_{\bS^{d-1}}\s{H}(P_\theta)\dd \theta.\]
\paragraph{Proof of 1.} Note that for any $P\in\cP_1(\mu,\Sigma)$ and $\theta\in\bS^{d-1}$, the mean and covariance of $P_\theta$ is $\theta^\intercal \mu$ and $\theta^\intercal\Sigma\theta$, respectively. Since the Gaussian distribution maximizes classic entropy over scalar distribution supported $\RR$ with a fixed (mean and) variance, we have $\s{H}(P_\theta)\leq \s{H}\big(\cN(\theta^\intercal \mu,\theta^\intercal\Sigma\theta)\big)=\frac{1}{2}\log(2\pi e \theta^\intercal\Sigma\theta)$ for any $\theta\in\bS^{d-1}$. Consequently,
\begin{equation}
\s{SH}(P)\leq \frac{1}{S_{d-1}}\oint_{\bS^{d-1}}\frac{1}{2}\log(2\pi e \theta^\intercal\Sigma\theta) \dd \theta,\quad\forall P\in\cP_1(\mu,\Sigma).\label{EQ:Gauss_UB}
\end{equation}
Take $P^\star=\cN(\mu,\Sigma)\in\cP(\mu,\Sigma)$ and observe that for any $\theta\in\bS^{d-1}$, we have $P^\star_\theta=\cN(\theta^\intercal \mu,\theta^\intercal\Sigma\theta)$. Therefore $\s{SH}(P^\star)$ achieves the upper bound in \eqref{EQ:Gauss_UB} and is the maximum sliced entropy distribution over $\cP_1(\mu,\Sigma)$.

\paragraph{Proof of 2.} We first show that a maximum entropy distributions over $\cP_2(c,r)$ must be rationally invariant and simultaneously maximize the differential entropy associated with each slice. For $X\sim P\in\cP(\RR^d)$ and an orthogonal matrix $\mathrm{U} \in \mathbb{R}^{d\times d}$, denote (with some abuse of notation) the distribution of $\mathrm{U}X$ by $\mathrm{U}_\sharp P$. Since the support constraint and the definition of sliced entropy are rotationally symmetric, if $P\in\cP_2(c,r)$ is a maximum sliced entropy distribution, then so is $\mathrm{U}_\sharp P$, for any $\mathrm{U}$ orthogonal. 

Assume $P\in\cP_2(c,r)$ maximizes sliced entropy. For any orthogonal $\rU\in\RR^{d\times d}$ define $\cA_\rU\subseteq \bS^{d-1}$ as the set of $\theta$ vectors for which the distribution of $\theta^\intercal X$ and $\theta^\intercal \rU X$ are different. 
Note that if $P$ maximizes $\s{SH}$ then the measure of $\cA_\rU$ must be zero. Indeed, if this is not the case, consider the mixture distribution $X^\lambda \sim P^{\lambda}:= \lambda P + (1-\lambda)\rU_\sharp P$, and note that by convexity of entropy
\[
\s{H}(\theta^\intercal X^\lambda) > \lambda \s{H}(\theta^\intercal X) + (1-\lambda) \s{H}(\theta^\intercal U X),\qquad \forall \lambda\in(0,1)\,,\,\theta\in\cA_\rU.
\]
Now, if $\cA_\rU$ has positive measure, by the definition of sliced entropy we get
\[\s{SH}(X^\lambda) > \frac{1}{S^{d-1}}\oint_{\mathbb{S}^{d-1}} \big(\lambda \s{H}(\theta^\intercal X) + (1-\lambda) \s{H}(\theta^\intercal U X) \big)\dd\theta=  \lambda \s{SH}(X) + (1-\lambda)\s{SH}(\rU X)= \s{SH}(X),
\]
violating the assumption that $X\sim P$ is a maximum sliced entropy distribution. Hence $X\sim P$ is rotationally invariant and has $\sH(\theta^\intercal X)$ invariant with $\theta$, as claimed. 

In what follows, we set $c=0$, the general case is recovered by the translation invariance of entropy.
For $d = 3$, by Archimedes' Hat Box Theorem, the projection of the distribution $\s{Unif}(\mathbb{S}^{2}(0,r))$ onto any $\theta$ yields $\theta^\intercal X \sim \s{Unif}\big([-r, r]\big)$, the entropy-maximizing distribution for the slice. 
Thus, $P = \mathrm{Unif}(\mathbb{S}^{2}(0,r))$ maximizes $\s{SH}$ for $d=3$.

For dimensions $d>3$, by symmetry we may consider $\theta$ of the form $(\theta_1\ \theta_2\ \theta_3\ 0\ \ldots 0)^\intercal$. Let $X \sim P$ for some rotationally-symmetric distribution $P$. Observe that
\[
\theta^T X = (\theta_1\ \theta_2\ \theta_3) (X_1\ X_2\ X_3)^\intercal = (\theta_1\ \theta_2\ \theta_3) \|(X_1\ X_2\ X_3)\|_2\left(\frac{(X_1\ X_2\ X_3)^\intercal}{\|(X_1\ X_2\ X_3)\|_2}\right).
\]
Define $R = \|(X_1\ X_2\ X_3)\|_2$, $\bar{\theta} = (\theta_1\ \theta_2\ \theta_3)^\intercal$, and $\bar{X} = \frac{(X_1\ X_2\ X_3)^\intercal}{\|(X_1\ X_2\ X_3)\|_2}$. By the spherical symmetry of $P$, we have that $\bar{X} \sim \s{Unif}(\mathbb{S}^{2}(0,1))$ and is independent of $R$. Let $\rho$ be the probability distribution of $R$, and recall that $\supp(\rho)=[0,r]$. 

For any fixed $\bar{\theta}$ and $R=r$, by Archimedes' Hat Box Theorem, $r \bar{\theta}^T \bar{X} \sim \s{Unif}\big([-r, r]\big)$. By independence, the density $g$ of $R \bar{\theta}^T \bar{X}$ is then
\[
g(t) = \int_0^1 \frac{1}{2\alpha}\mathds{1}_{\{|t| \leq \alpha\}} d\rho(\alpha), \quad t \in [-r, r],
\]
where $\mathds{1}_A$ is the indicator of $A$. Observe that $g$ is symmetric about 0 and is monotonically nonincreasing away from 0. 


We next show that transporting mass in $\rho$ to larger radii values cannot decrease entropy. Let $\epsilon>0$ and consider moving mass $\epsilon$ in $\rho$ from location $\alpha$ to $\alpha' > \alpha$, changing $g$ to $g'$. Doing so decreases $g$ by $\epsilon\big(1/(2\alpha) - 1/(2\alpha')\big)$ on the interval $t \in (-\alpha,\alpha)$, and increases it by $\epsilon /(2\alpha')$ on the intervals $t \in [-\alpha', -\alpha)\cup (\alpha,\alpha']$. Furthermore, both $g$ and $g'$ monotonically nonincrease away from 0. At $t=\alpha, -\alpha$, set $g = g'$.
The corresponding change in entropy is
\begin{align*}
\s{H}(g') - \s{H}(g) &= \int g \log g - g' \log g' dt \\
&= 2 \int_\alpha^{\alpha'} [g \log g - g' \log g'] dt + 2\int_0^\alpha [g \log g - g' \log g'] dt \numberthis\label{eq:both}
\end{align*}
We bound these terms separately. Since $g, g'$ are both monotonically non-increasing away from 0, 
\begin{align*}
    \int_\alpha^{\alpha'} [g \log g - g' \log g'] dt 
    &\geq \int_\alpha^{\alpha'} \left[g \log g - g' \left(\log g + \frac{g'-g}{g}\right)\right] dt\\
    &= \int_\alpha^{\alpha'} \left[(g-g') \left(\log g + \frac{g'}{g}\right)\right] dt\\
    &= - \frac{\epsilon}{2\alpha'} \int_\alpha^{\alpha'} \left[\log g + \frac{g'}{g}\right] dt\\
    &\geq - \frac{\epsilon}{2\alpha'}(\alpha'-\alpha) \left[\log g(\alpha) + \frac{g'(\alpha)}{g(\alpha)}\right]\\
    &= - \frac{\epsilon}{2\alpha'}(\alpha'-\alpha)  \big[\log g(\alpha) + 1\big] \numberthis\label{eq:pt1}
\end{align*}
where we have used the concavity of $\log$ to upper bound $\log g' \leq \log g + (g'-g)/g$. Similarly, we have 
\begin{align*}
    \int_0^\alpha [g \log g - g' \log g'] dt 
    &\geq \int_0^{\alpha} \left[g \log g - g' \left(\log g + \frac{g'-g}{g}\right)\right] dt\\
    &= \int_0^\alpha \left[(g-g') \left(\log g + \frac{g'}{g}\right)\right] dt\\
    &= \epsilon\left(\frac{1}{2\alpha} - \frac{1}{2\alpha'}\right) \int_0^\alpha \left[\log g + \frac{g'}{g}\right] dt\\
    &\geq \epsilon\left(\frac{1}{2\alpha} - \frac{1}{2\alpha'}\right) \alpha \left[\log g(\alpha) + \frac{g'(\alpha)}{g(\alpha)}\right]\\
    &=\epsilon\left(\frac{1}{2\alpha} - \frac{1}{2\alpha'}\right) \alpha \big[\log g(\alpha) + 1\big] \numberthis\label{eq:pt2}
\end{align*}
Substituting \eqref{eq:pt1} and \eqref{eq:pt2} into \eqref{eq:both} yields
\begin{align*}
    \s{H}(g') - \s{H}(g) &\geq 2 \left[\epsilon\alpha\left(\frac{1}{2\alpha} - \frac{1}{2\alpha'}\right) -\frac{\epsilon}{2\alpha'}(\alpha'-\alpha) \right] \big[\log g(\alpha) + 1\big]= 0.
\end{align*}
Thus, entropy cannot decrease by moving the mass in $\rho$ to larger $R$ values.
Note that for any spherically symmetric $X\sim P$ supported in $\mathbb{S}^{d-1}(0,r)$, the transformation $X' \gets r \frac{X}{\|X\|_2}$ yields $R' = \|(X'_1\ X'_2\ X'_3)\|_2 = \big\|\frac{r}{\|X\|_2}(X_1\ X_2\ X_3)\big\|_2 = \frac{r}{\|X\|_2}R  $, i.e. since $\|X\|_2 \leq r$ the transformation uniformly increases $R$ (and thus $\sH(g)$), with no change to the distribution of $\bar{X}$. Therefore, $P =  \s{Unif}(\mathbb{S}^{d-1}(0,r))$ is the maximum sliced-entropy distribution.

\paragraph{Proof of 3.} Similar to the Gaussian case of Claim 1, we use the fact that the maximum entropy distribution satisfying $\mathbb{E}|X - \mu| = a$ is the (univariate) Laplace distribution. To maximize the sliced entropy, we thus seek a distribution $P$ that results in each $\theta^TX$ having the same Laplace distribution. Since linear projections of the isotropic Symmetric Multivariate Laplace distribution \cite{kotz2012laplace} are all univariate Laplace distributions with the same parameter, this is a maximum sliced entropy distribution for the class. Unfortunately we could not find the exact parameter conversion ($b$ required to achieve $a$) in the literature.

\end{proof}

\subsection{Proof of Proposition \ref{PROP:SMI_DV}}\label{SUBSEC:SMI_DV_proof}

Denote $X_\Theta:=\Theta^\intercal X$ and $X_\Phi:=\Phi^\intercal X$ and observe that $P_{X_\Theta,Y_\Phi|\Theta,\Phi}(\cdot,\cdot|\theta,\phi)=(\pi^\theta,\pi^\phi)_\sharp P_{X,Y}$. Consider the following two joint distribution:
\begin{align*}
    P_{\Theta,\Phi,X_\Theta,Y_\Phi}&=P_{\Theta,\Phi}\times P_{X_\Theta,Y_\Phi|\Theta,\Phi}\\
    Q_{\Theta,\Phi,X_\Theta,Y_\Phi}&=P_{\Theta,\Phi}\times P_{X_\Theta|\Theta}\times P_{Y_\Phi|\Phi},
\end{align*}
where $P_{\Theta,\Phi}=\mathsf{Unif}(\bS^{d_x-1})\times \mathsf{Unif}(\bS^{d_y-1})$, while  $P_{X_\Theta|\Theta}$ and $P_{Y_\Phi|\Phi}$ are the conditional marginals of $P_{X_\Theta,Y_\Phi|\Theta,\Phi}$. By Claim 3 from Proposition \ref{PROP:SMI_prop}, we have
\[
\s{SI}(X;Y)=\s{D}_{\s{KL}}\big(P_{X_\Theta,Y_\Phi|\Theta,\Phi}\big\|P_{X_\Theta|\Theta}\otimes P_{Y_\Phi|\Phi}\big| P_{\Theta,\Phi}\big)=\s{D}_{\s{KL}}\big(P_{\Theta,\Phi,X_\Theta,Y_\Phi}\big\|Q_{\Theta,\Phi,X_\Theta,Y_\Phi}\big),
\]
where the last step using the KL divergence chain rule. The proof is concluded by invoking the Donsker-Varadhan representation for KL divergence \cite{Donsker1983representation}
\[
\s{D}_{\s{KL}}(P\|Q)=\sup_g \EE_P[g]-\log\big(\EE_Q[e^g]\big).
\]

\begin{remark}[Max-sliced MI]
A similar variational form can be established for max-sliced MI, i.e., $\sup_{\theta,\phi}\s{I}(\theta^\intercal X;\phi^\intercal Y)$. In that case the variation representation is 
\[
\sup_{g\in\cG_\mathsf{proj}} \EE\big[g(X,Y)\big]-\log\left(\EE\big[e^{g(\tilde{X},\tilde{Y})}\big]\right),
\]
with $\cG_\s{proj}:=\big\{g\circ(\pi^\theta,\pi^\phi):\, (\theta,\phi)\in\bS^{d_x-1}\mspace{-3mu}\times \bS^{d_y-1}\mspace{-3mu},\, g:\RR^2\mspace{-3mu}\to\mspace{-3mu}\RR\big\}$ is the class of projecting functions. The derivation is similar and is thus omitted.
\end{remark}

\subsection{Proof of Theorem \ref{THM:rate}}\label{SUBSEC:rate_proof}
Denote $\s{I}_{X,Y}(\theta,\phi):=\s{I}(\theta^\intercal X;\phi^\intercal Y)$ and notice that $\EE\big[\s{I}_{XY}(\Theta,\Phi)\big]=\s{SI}(X;Y)$, where $(\Theta,\Phi)\sim\mathsf{Unif}(\bS^{d_x-1})\otimes\mathsf{Unif}(\bS^{d_y-1})$. By the triangle inequality we have
\[
\big|\s{SI}(X;Y) - \widehat{\s{SI}}_{n,m}\big| \leq \left|\s{SI}(X;Y)-\frac{1}{m}\sum_{i=1}^m \s{I}_{XY}(\Theta_i,\Phi_i)\right|+ \left|\frac{1}{m}\sum_{i=1}^m \s{I}_{XY}(\Theta_i,\Phi_i)-\widehat{\s{SI}}_{n,m}\right|.
\]

For the first term, since $\{(\Theta_i,\Phi_i)\}_{i=1}^m$ are i.i.d., we obtain
\[
\EE\left[\left|\s{SI}(X;Y)-\frac{1}{m}\sum_{i=1}^m \s{I}_{XY}(\Theta_i,\Phi_i)\right|\right]\leq \sqrt{\frac{1}{m}\mathsf{var}\big(\s{I}_{XY}(\Theta,\Phi)\big)}\leq \frac{M}{2\sqrt{m}}
\]
uniformly over $P_{X,Y}\in\cF_d(M)$, where the last step follows because $0\leq \s{I}_{XY}(\Theta,\Phi)\leq \s{I}(X;Y)\leq M$ a.s. 

For the second term, recall the notation $X_\theta:=\theta^\intercal X$ and $Y_\phi:=\phi^\intercal Y$, and observe that
\begin{align*}
\EE\left[\left|\frac{1}{m}\sum_{i=1}^m \s{I}_{XY}(\Theta_i,\Phi_i)-\widehat{\s{SI}}_{n,m}\right|\right] &\leq \frac{1}{m}\sum_{i=1}^m \EE\left[\left|\s{I}_{XY}(\Theta_i,\Phi_i) - \hat{\s{I}}_{XY}(\Theta_i,\Phi_i) \right| \right]\\&\leq \max_{\theta,\phi}\EE\left[\left|\s{I}(X_\theta;Y_\phi)-\hat{\s{I}}\big(X_\theta^n,Y_\phi^n\big)\right|\right],
\end{align*}
where $(X_\theta^n,Y_\phi^n)$ are pairwise i.i.d. samples of $(X_\theta,Y_\phi)\sim (\pi^\theta,\pi^\phi)_{\sharp} P_{X,Y}$. This falls under the MI risk bound from~\eqref{EQ:MI_est}, yielding a bound of $\delta(n)$. \qed

\subsection{Proof of Corollary \ref{CORR:logconcave}}\label{SUBSEC:logconcave_proof}

The bounded MI assumption in the definition of $\cF_d(M)$ can be relaxed to a bounded the max-SMI, i.e., \[\max_{\theta\in \bS^{d_x-1}, \phi\in \bS^{d_y-1}} \s{I}(\theta^\intercal  X; \phi^\intercal  Y)\leq M.
\]
We next derive a uniform bound (over $(\theta,\phi)\in\bS^{d_x-1}\times \bS^{d_y-1}$) on
\[\s{I}(\theta^\intercal  X; \phi^\intercal  Y)= h(\theta^\intercal  X)+h(\phi^\intercal  Y)-h(\theta^\intercal  X,\phi^\intercal  Y).
\]
Since the Gaussian distribution maximize sliced (differential) entropy under a second moment constraint, we have
\[
h(\theta^\intercal  X)+h(\phi^\intercal  Y)\leq \frac{1}{2}\log\big((2\pi e)^2 (\theta^\intercal\Sigma_X\theta)(\phi^\intercal\Sigma_Y\phi)\big).
\]
For the joint entropy, recall that log-concavity is preserved under affine transformations of coordinates and marginalization \cite[Lemma 2.1]{dharmadhikari1988unimodality}. Therefore $(\pi^\theta,\pi^\phi)_\sharp P_{X,Y}$ is also log-concave, and by Theorem 4 of \cite{marsiglietti2018lower} we obtain
\[h(\theta^\intercal  X,\phi^\intercal  Y)\geq \frac{1}{2}\log\left(\frac{e^4}{32}\big((\theta^\intercal\Sigma_X\theta)(\phi^\intercal\Sigma_Y\phi)-(\theta^\intercal\Sigma_{XY}\phi)(\phi^\intercal\Sigma_{YX}\theta)\big)\right).\]

Combining the two bounds we obtain
\begin{align*}
    \s{I}(\theta^\intercal  X; \phi^\intercal  Y)&\leq \frac{1}{2}\log\left(\frac{\pi^2}{8}\frac{(\theta^\intercal\Sigma_X\theta)(\phi^\intercal\Sigma_Y\phi)}{(\theta^\intercal\Sigma_X\theta)(\phi^\intercal\Sigma_Y\phi)-(\theta^\intercal\Sigma_{XY}\phi)^2}\right)\\
    &=\frac{1}{2}\log\left(\frac{\pi^2}{8}\frac{1}{1-\rho^2(\theta^\intercal X,\phi^\intercal Y)}\right)\\
    &\leq \frac{1}{2}\log\left(\frac{\pi^2}{8}\frac{1}{1-\rho_\mathsf{CCA}^2(X,Y)}\right),
\end{align*}
from which the claim follows.\qed

\subsection{Proof of Corollary \ref{CORR:effective_rate}}\label{SUBSEC:effective_rate_proof}

The main idea is to use Theorem 2 from \cite{han2017optimal} to control the estimation error of each differential entropy in the decomposition of  $\s{I}(\theta^\intercal X; \phi^\intercal Y)$, where $(\theta,\phi)\in\mathbb{S}^{d_x-1}\times \mathbb{S}^{d_y-1}$. To that end, we first show that since $p_{X,Y} \in \mathsf{Lip}_{s,p,d_x+d_y}(L)$ (by assumption), any of its projections also belong to a generalized Lipschitz class as well of the appropriate dimension. To state the result, let $p_{X_\theta}$, $p_{Y_\phi}$ and $p_{X_\theta,Y_\phi}$ be the density of $\theta^\intercal X$, $\phi^\intercal Y$, and $(\theta^\intercal X, \phi^\intercal Y)$, respectively.

\begin{lemma}[Lipschitzness of projections]\label{lem:lip}
If $p_{X,Y}\in\mathsf{Lip}_{s,p,d_x + d_y}(L)$, then $p_{X_\theta},p_{Y_\phi} \in \mathsf{Lip}_{s,p,1}(L)$, and $p_{X_\theta,Y_\phi} \in \mathsf{Lip}_{s,p,2}(L)$, for any $(\theta,\phi) \in \bS^{d_x-1}\times \bS^{d_y - 1}$.
\end{lemma}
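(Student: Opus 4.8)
The plan is to realize each projected density as an orthogonal rotation followed by an axis-aligned marginalization, and to show that neither operation increases the generalized Lipschitz norm. For the joint case I would pick an orthogonal matrix $\mathrm{U}\in\RR^{(d_x+d_y)\times(d_x+d_y)}$ whose first two rows are $(\theta^\intercal,\mathbf 0)$ and $(\mathbf 0,\phi^\intercal)$ — these are orthonormal since $\|\theta\|_2=\|\phi\|_2=1$ and they occupy disjoint coordinate blocks — and complete it to an orthonormal basis. Writing $\tilde f:=p_{X,Y}\circ\mathrm{U}^\intercal$ for the density of $\mathrm U(X,Y)$, the density $p_{X_\theta,Y_\phi}$ is exactly the marginal of $\tilde f$ over its last $d_x+d_y-2$ coordinates. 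The single-projection cases $p_{X_\theta},p_{Y_\phi}$ are handled identically, first marginalizing out the untouched block and then rotating within the remaining one.

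Next I would establish rotation invariance of $\|\cdot\|_{\mathsf{Lip}_{s,p,d}}$. The $L^p$ term is invariant because $|\det\mathrm U|=1$. For the smoothness seminorm, a direct computation gives $\Delta^{\lceil s\rceil}_{te}(f\circ\mathrm U)=\big(\Delta^{\lceil s\rceil}_{t\mathrm Ue}f\big)\circ\mathrm U$, so $\|\Delta^{\lceil s\rceil}_{te}(f\circ\mathrm U)\|_{p,d}=\|\Delta^{\lceil s\rceil}_{t\mathrm Ue}f\|_{p,d}$ by change of variables, and since $e\mapsto\mathrm Ue$ is a bijection of the unit ball the inner supremum is unchanged. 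Hence $\|f\circ\mathrm U\|_{\mathsf{Lip}_{s,p,d}}=\|f\|_{\mathsf{Lip}_{s,p,d}}$, and it suffices to control the marginalization step.

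For the marginalization, write $g(u)=\int_{\RR^{k}}\tilde f(u,y)\,\dd y$ with $u\in\RR^{d'}$ and $k=d-d'$. The difference operator in the retained coordinates commutes with the integration, $\Delta^{\lceil s\rceil}_{te'}g(u)=\int_{\RR^k}\Delta^{\lceil s\rceil}_{t(e',\mathbf 0)}\tilde f(u,y)\,\dd y$ (by Fubini), and since $\|(e',\mathbf 0)\|_2=\|e'\|_2$ the relevant directions embed isometrically. The crux is then the $L^p$ contraction $\|\Delta^{\lceil s\rceil}_{te'}g\|_{p,d'}\le\|\Delta^{\lceil s\rceil}_{t(e',\mathbf 0)}\tilde f\|_{p,d}$ (and likewise $\|g\|_{p,d'}\le\|\tilde f\|_{p,d}$), which I would obtain from Jensen's/H\"older's inequality applied fiberwise: for each $u$, $\big|\int_{F_u}\tilde f(u,y)\,\dd y\big|^p\le |F_u|^{p-1}\int_{F_u}|\tilde f(u,y)|^p\,\dd y$, where $F_u$ is the bounded fiber carrying the support. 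Integrating in $u$ and using Fubini yields the contraction with a factor $(\sup_u|F_u|)^{(p-1)/p}$.

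This last factor is exactly where the main difficulty sits. For the axis-aligned marginalizations the fibers are faces of the unit cube, so $|F_u|\le 1$ and the contraction is clean, giving the bound with constant $L$. After the rotation, however, the fibers are hyperplane sections of the rotated cube, whose $(d-1)$-dimensional volume can exceed $1$ (though it stays dimension-free bounded, e.g. by $\sqrt 2$ via Ball's cube-slicing inequality). Controlling these section volumes is the principal obstacle: taking them to be at most $1$ recovers the stated constant $L$ exactly, while in general one incurs a dimensional factor that is absorbed into the constant $C$ of Corollary~\ref{CORR:effective_rate}. Once the contraction is in hand, combining it with rotation invariance and taking suprema over $t>0$ and $\|e'\|_2\le1$ gives $\|g\|_{\mathsf{Lip}_{s,p,d'}}\le\|\tilde f\|_{\mathsf{Lip}_{s,p,d}}=\|p_{X,Y}\|_{\mathsf{Lip}_{s,p,d_x+d_y}}\le L$, which is the claim.
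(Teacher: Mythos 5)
Your proposal follows essentially the same route as the paper's proof: reduce to canonical projection directions via an orthogonal rotation (using that the $L^p$ norm and the difference seminorm, with its supremum over all unit directions $e$, are rotation invariant), then realize the projected density as an axis-aligned marginal and control its norms by Jensen's/H\"older's inequality applied fiberwise. The one substantive point where you go beyond the paper is the fiber-volume issue, and you are right to flag it: after rotating, the support of $p_{X,Y}\circ\mathrm{U}^\intercal$ is a rotated cube, so the marginalization fibers are affine sections of $[0,1]^{d_x}\times[0,1]^{d_y}$ whose volume can exceed $1$ (e.g.\ the diagonal section of the unit square has length $\sqrt2$), whereas the paper simply writes the marginal as an integral over $[0,1]^{d'}$ and applies Jensen with an implicit unit fiber volume. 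With the honest accounting, the Jensen step yields $\|p_{X_\theta,Y_\phi}\|_{\mathsf{Lip}_{s,p,2}}\leq C'L$ with a universal $C'$ bounded via cube-slicing estimates rather than exactly $L$; as you note, this is harmless downstream since Corollary~\ref{CORR:effective_rate} only needs membership in some Lipschitz ball whose radius enters through the constant $C$, but it means the lemma's constant $L$ (and the paper's own proof of it) is stated slightly too sharply.
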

\begin{proof}
We present the derivation for $p_{X_\theta,Y_\phi}$; the proof for $p_{X_\theta}$ and $p_{Y_\phi}$ is similar. Note that Definition~\ref{DEF:Lip} is invariant to rotations of both the $X$ and $Y$. Hence, without loss of generality, we may assume that $\theta$ and $\phi$ are both canonical unit vectors, e.g., both equal $e_1=(1\ 0\ \ldots 0)$ of the appropriate dimension. Consequently, $\theta^\intercal X = X_1$ and $\phi^\intercal Y = Y_1$. Denote $x_{2:}:=(x_2\,\ldots x_{d_x})$ and $y_{2:}:=(y_2\,\ldots y_{d_y})$ and write 
\[
p_{X_\theta,Y_\phi}(x_1, y_1) = \int_{[0,1]^{d'}} p_{X,Y}(x,y) \dd x_{2:} \dd y_{2:},
\]
where $d'=d_x+d_y-2$ and we have used the fact that $\theta^\intercal X = X_1$ and $\phi^\intercal Y = Y_1$. Finally, for each $x_1,y_1\in[0,1]^2$, we denote $p^{(x_1,y_1)}(x_{:2},y_{:2}):=p_{X,Y}(x_1,x_{:2},y_1,y_{:2})$.

We now bound the norms that make up the definition of the generalized Lipschitz class. First, consider
\begin{align*}
    \|p_{\theta,\phi}\|_{p,2}&= \left\|\int_{[0,1]^{d'}}p^{(\cdot,\cdot)}(x_{:2},y_{:2})\dd x_{:2} \dd y_{:2}\right\|_{p,2}\\
    &\leq \left(\int_{[0,1]^2}\left(\int_{[0,1]^{d'}}\Big(p^{(x_1,y_1)}(x_{:2},y_{:2})\Big)^p\dd x_{:2} \dd y_{:2}\right)\dd x_1\dd y_1\right)^{1/p}\\
    &=\|p_{X,Y}\|_{p,d_x+d_y},
\end{align*}
where the 2nd step follows because $\int_{[0,1]^{d'}}p^{(x_1,y_1)}(x_{:2},y_{:2})\dd x_{:2}\dd y_{:2}\leq \big\|p^{(x_1,y_1)}\big\|_{p,d'}$ by Jensen's inequality. Similarly, denoting by $e\in\RR^d$ the vector that has 1's in its first and $(d_x+1)$th coordinates and 0's otherwise, for any $(x_1,y_1)\in[0,1]^2$, we have 
\[
\left|\Delta_{t(1\, 1)}^r p_{\theta,\phi}(x_1,y_1)\right|\leq \int_{[0,1]^{d'}}\left|\Delta_{te}^r p^{(x_1,y_1)}(x_{:2},y_{:2})\right|\dd x_{:2}\dd y_{:2}\leq \left\|\Delta_{te}^r p^{(x_1,x_2)}\right\|_{p,d'},
\]
where the last step uses Jensen's inequality once more. Having that, we obtain
\[
\|\Delta_{te}^r p_{\theta,\phi}\|_{p,2}\leq \left(\int_{[0,1]^2}\left\|\Delta_{te}^r p^{(x_1,y_1)}\right\|_{p,d'}^p\dd x_1\dd y_1\right)^{1/p}=\left\|\Delta_{te}^rp_{X,Y}\right\|_{p,d_x+d_y}.
\]
Consequently $\|p_{\theta,\phi}\|_{\mathsf{Lip}_{p,s,2}}\leq \|p_{X,Y}\|_{\mathsf{Lip}_{p,s,d_x+d_y}}\leq L$, for all $(\theta,\phi)\in\bS^{d_x-1}\times\bS^{d_y-1}$, as required.
\end{proof}

Based on the lemma, we may invoke \cite[Theorem 2]{han2017optimal} to obtain error bounds on the estimation of the sliced entropy terms that comprise SMI. We first restate the result of \cite{han2017optimal}: if $X\sim p_X\in \mathsf{Lip}_{p,s,d}(L)$, for $d\in\NN$, $s\in(0,2]$, $p\in[2,\infty)$, is $\beta$-sub-Gaussian\footnote{A $d$-dimensional random variable $X$ is $\beta$-sub-Gaussian if $\EE\big[e^{\beta\|X\|^2}\big]<\infty$.}, $\beta>0$, and satisfies the tail bound  $\int_{\RR^d}e^{\beta\|x\|^2}p_X(x)\dd x\leq L$, then 
\begin{equation}
    \left( \mathbb{E}\left[\big(\hat{\s{H}}(X^n) - \s{H}(X) \big)^2\right]\right)^{\frac{1}{2}} \leq C \left((n \log n)^{- \frac{s}{s+d}} (\log n)^{\frac{d}{2}\left(1 - \frac{d}{p(s+d)}\right)} + n^{-\frac{1}{2}}\right),\label{EQ:Han19}
\end{equation}
for a constant $C$ depending only on $s,p,d,\beta,L$. 

Note that $p_{X_\theta}$, $p_{X_\theta,Y_\phi}$, and $p_{Y_\phi}$, for any $(\theta, \phi)\in\bS^{d_x-1}\times \bS^{d_y-1}$, are compactly supported and hence sub-Gaussian (with a sub-Gaussian constant and tail bound that depend only on $d$ and $L$). Lemma \ref{lem:lip} then implies that $\s{H}(\theta^\intercal X)$, $\s{H}(\phi^\intercal Y)$, and $\s{H}(\theta^\intercal X,\phi^\intercal Y)$ can all be estimated within the framework of \cite{han2017optimal} under the error bound from \eqref{EQ:Han19}. Denoting the respective estimators by adding a hat to the differential entropy notation and letting $\mathsf{e}_\theta$, $\mathsf{e}_\phi$, and $\mathsf{e}_{\theta,\phi}$ be their $L_2$ errors, we obtain
\begin{equation}
    \max\big\{\mathsf{e}_\theta,\mathsf{e}_\phi,\mathsf{e}_{\theta,\phi}\big\}\leq C \left((n \log n)^{- \frac{s}{s+2}} (\log n)^{\left(1 - \frac{2}{p(s+2)}\right)} + n^{-\frac{1}{2}}\right),\quad\forall(\theta,\phi)\in\bS^{d_x-1}\times\bS^{d_y-1}.\label{EQ:sliced_rate}
\end{equation}
Here we used the fact that the rate is dominated by the error in estimating the 2-dimensional differential entropy $\s{H}(\theta^\intercal X,\phi^\intercal Y)$. Recall that the considered MI estimator relies on the decomposing
\[
\s{I}(\theta^\intercal X'\phi^\intercal Y)=\s{H}(\theta^\intercal X)+\s{H}(\phi^\intercal Y)-\s{H}(\theta^\intercal X,\phi^\intercal Y)
\]
and estimating each sliced entropy separately. Bounding the MI estimation error via \eqref{EQ:sliced_rate} produces the result. \qed

\subsection{Proof of Proposition \ref{PROP:SMI_DPI}}\label{SUBSEC:SMI_DPI_proof}

\paragraph{Proof of 1.} 
By Part 2 of Proposition \ref{PROP:SMI_prop}, we have 
\begin{align*}
\s{SI}(\mathrm{A}_x X + b_x; \mathrm{A}_y Y + b_y) &\leq \sup_{\theta,\phi} \s{I}\big(\theta^\intercal (\mathrm{A}_x X + b_x); \phi^\intercal (\mathrm{A}_y Y + b_y)\big)\\
&\leq \sup_{\theta,\phi} \s{I}(\theta^\intercal X; \phi^\intercal Y),
\end{align*}
where in the last step we have used the DPI of classic MI. Now, let $\{(\theta_i, \phi_i)\}_{i=1}^\infty$ be a sequence converging to the supremum of $\s{I}(\theta^\intercal X; \phi^\intercal Y)$. Set $b_y = b_x = 0$, and consider the sequence $\{(\mathrm{A}_x^i, \mathrm{A}_y^i)\}_{i=1}^n$ where $\mathrm{A}_x^i = (\theta_i\ 0\  \ldots\ 0)^\intercal, \mathrm{A}_y^i = (\phi_i\ 0\ \ldots\ 0)^\intercal$.  Clearly, for each $i$, we have
\[\s{SI}(\mathrm{A}_x^i X ; \mathrm{A}_y^i Y ) = \s{I}(\theta_i^\intercal X; \phi_i^\intercal Y),\]
which implies the first claim.


\paragraph{Proof of 2.}
Let $\mathcal{O}(d)$ be the set of orthogonal $d\times d$ real-valued matrices. For $\mathrm{U}\sim\s{Unif}\big(\cO(d)\big)$ and $\tilde{\Theta}\sim\s{Unif}(\bS^{r-1})$ independent, note that $[\mathrm{U}]_{:,1:r}\tilde{\Theta}\sim\s{Unif}(\bS^{d-1})$, where $[\mathrm{U}]_{:,1:r}$ stands for the first $r$ columns of $\mathrm{U}$.
We therefore have:
\begin{align*}
\s{SI}(\mathrm{A}_x X;  \mathrm{A}_y Y)
        &= 
        \sI\big(\tilde{\Theta}^\intercal[\mathrm{U}_x]_{:,1:r_x}^\intercal \mathrm{A}_x X; \tilde{\Phi}^\intercal[\mathrm{U}_y]_{:,1:r_y}^\intercal \mathrm{A}_y Y\big|\tilde{\Theta},\tilde{\Phi},\mathrm{U}_x,\mathrm{U}_y\big)\\
        &\leq \sup_{\substack{\mathrm{U}_x \in\mathcal{O}(d_x),\\\mspace{-6mu} \mathrm{U}_y \in \mathcal{O}(d_y)}} \s{SI}([\mathrm{U}_x]_{:,1:r_x}^\intercal \mathrm{A}_x X; [\mathrm{U}_y]_{:,1:r_y}^\intercal \mathrm{A}_y Y),\numberthis\label{EQ:prop4_prf}
\end{align*}
where the last inequality follows by upper bounding the expectation by the supremum and the independence of $(\rU_x,\rU_y)$ and $(\tilde{\Theta},\tilde{\Phi},X,Y)$.

Note that if $\mathrm{A}_x \in \cM_{d_x,d_x}(r_x,c_x)$ and $\mathrm{A}_y \in \cM_{d_y,d_y}(r_y,c_y)$, then $[\mathrm{U}_x]_{:,1:r_x}^\intercal \mathrm{A}_x \in \cM_{r_x,d_x}(r_x,c_x)$, $[\mathrm{U}_y]_{:,1:r_y}^\intercal \mathrm{A}_y \in \cM_{r_y,d_y}(r_y,c_y)$ (since the first $r$ singular values of $\mathrm{A}_x$ and $\mathrm{A}_y$ are inside $[1/c_x, c_x]$ and $[1/c_y, c_y]$, respectively). 
Using this observation while supremizing the LHS of \eqref{EQ:prop4_prf}, we obtain
\[ \sup_{\substack{\mathrm{A}_x \in \cM_{d_x,d_x}(r_x,c_x),\\\mspace{-6mu}\mathrm{A}_y \in \cM_{d_x,d_x}(r_y,c_y)}}\s{SI}(\mathrm{A}_x X; \mathrm{A}_y Y) \leq \sup_{\substack{\mathrm{B}_x\in \cM_{r_x,d_x}(r_x,c_x),\\ \mspace{-6mu}\mathrm{B}_y \in \cM_{r_y,d_y}(r_y,c_y)}} \s{SI}(\mathrm{B}_xX; \mathrm{B}_yY).\]
The opposite inequality follows by only considering those matrices $(\mathrm{A}_x,\mathrm{A}_y)$ whose bottom $d_x-r_x$ or $d_y-r_y$ rows are zeros. 

\subsection{Proof of Corollary \ref{COR:SMI_DPI}}\label{SUBSEC:SMI_DPI_NN_proof}



We begin by considering fixed $\mathrm{W}_x, \mathrm{W}_y, b_x, b_y$.
By Part 2 of Proposition \ref{PROP:SMI_prop}, we have 
\begin{align*}
\s{SI}(\mathrm{A}_x \sigma(\mathrm{W}_x^\intercal X + b_x) ; \mathrm{A}_y \sigma(\mathrm{W}_y^\intercal Y + b_y) ) &\leq \sup_{\theta,\phi} \s{I}\big(\theta^\intercal \mathrm{A}_x \sigma(\mathrm{W}_x^\intercal X + b_x) ; \phi^\intercal \mathrm{A}_y \sigma(\mathrm{W}_y^\intercal Y + b_y) \big)\\
&\leq \sup_{\theta,\phi} \s{I}\big(\theta^\intercal \sigma(\mathrm{W}_x^\intercal X + b_x); \phi^\intercal \sigma(\mathrm{W}_y^\intercal Y + b_y)\big),\numberthis \label{eq:nneq}
\end{align*}
where in the last step we have used the DPI of classic MI. Now, let $\{(\theta_i, \phi_i)\}_{i=1}^\infty$ be a sequence converging to the supremum of $\s{I}\big(\theta^\intercal \sigma(\mathrm{W}_x^\intercal X + b_x); \phi^\intercal \sigma(\mathrm{W}_y^\intercal Y + b_y)\big)$. Consider the sequence $\{(\mathrm{A}_x^i, \mathrm{A}_y^i)\}_{i=1}^n$ where $\mathrm{A}_x^i = (\theta_i\ 0\  \ldots\ 0)^\intercal, \mathrm{A}_y^i = (\phi_i\ 0\ \ldots\ 0)^\intercal$.  Clearly, for each $i$, we have
\[\s{SI}\big(\mathrm{A}_x^i \sigma(\mathrm{W}_x^\intercal X + b_x) ; \mathrm{A}_y^i \sigma(\mathrm{W}_y^\intercal Y + b_y)\big) = \s{I}\big(\theta_i^\intercal \sigma(\mathrm{W}_x^\intercal X + b_x); \phi_i^\intercal \sigma(\mathrm{W}_y^\intercal Y + b_y)\big),\]
which implies that equality in \eqref{eq:nneq} can be achieved. Hence the supremum of the LHS over $A_x, A_y$ equals the RHS. Supremizing both sides over $\mathrm{W}_x, \mathrm{W}_y, b_x, b_y$ then yields the corollary.

\section{Pseudocode and Complexity of the SMI Estimator}
\label{supp:pseudo}
Algorithm \ref{alg:estimator} shows the pseudocode for our SMI estimator \eqref{EQ:SMI_est}, repeated here:
\begin{equation*}
\widehat{\s{SI}}_{n,m}=\widehat{\s{SI}}_{n,m}(X^n,Y^n,\Theta^m,\Phi^m):=\frac{1}{m}\sum_{i=1}^m \hat{\s{I}}\big((\Theta_i^\intercal X)^n,(\Phi_i^\intercal Y)^n\big).
\end{equation*}
It requires as input some 1 dimensional MI estimator $\hat{\s{I}}(\cdot;\cdot)$ which takes as input a sample from the joint distribution of two 1-dimensional variables and outputs an estimate of their MI. 
\begin{algorithm}
\begin{algorithmic}
\Require $n$ (pairs of) samples $(X^n,Y^n)$ i.i.d. according to $P_{X,Y}\in\cP(\mathbb{R}^{d_x}\times Y\in\mathbb{R}^{d_y})$, a scalar MI estimator $\hat{\s{I}}(\cdot;\cdot)$, and a chosen number of slices $m$.
\For{$i = 1:m$}
\State Sample $\Theta_i$ uniform on the sphere $\mathbb{S}^{d_x - 1}$.\footnotemark
\State Sample $\Phi_i$ uniform on the sphere $\mathbb{S}^{d_y - 1}$.
\State Compute the MI estimate: $S_i \gets \hat{\s{I}}\big((\Theta_i^\intercal X)^n,(\Phi_i^\intercal Y)^n\big)$.
\EndFor
\State $\widehat{\s{SI}}_{n,m}\gets \frac{1}{m}\sum_{i=1}^m S_i$
\end{algorithmic}

\caption{SMI Estimator}
\label{alg:estimator}
\end{algorithm}
\footnotetext{A uniform sample from $\mathbb{S}^{d_x - 1}$ can be found by sampling a vector $Z$ from a $d_x$-dimensional isotropic Gaussian and forming $Z/\|Z\|_2$.}

Reading off from Algorithm \ref{alg:estimator}, the computational complexity of the estimator is $O\big(m(d_x +d_y)n + m A(n)\big)$, where $A(n)$ is the computational complexity of the scalar MI estimator. It can be seen that the computational complexity scales linearly with dimension and the number of slices $m$. The scaling with the number of samples $n$ follows $\max\{n,A(n)\}$. 
\newpage

\section{MI Convergence Experiment}\label{supp:MIest}
In Figure \ref{fig:MI}, we show convergence results of MI estimation using the Kozachenko-Leonenko, EDGE \cite{noshad2019scalable}, and MINE \cite{belghazi2018mine} estimators. The data is the standard Gaussian vectors with 5 overlapping components as described for the $d=10$ case in Figure \ref{fig:Conv}(b,c) of the main text. Note that the MI estimators converge slowly in this high dimensional regime, in contrast to the $n^{-1/2}$ convergence rate for SMI estimation seen in Figure \ref{fig:Conv}(b).

\begin{figure}[h]
\centering
\includegraphics[width=.5\textwidth]{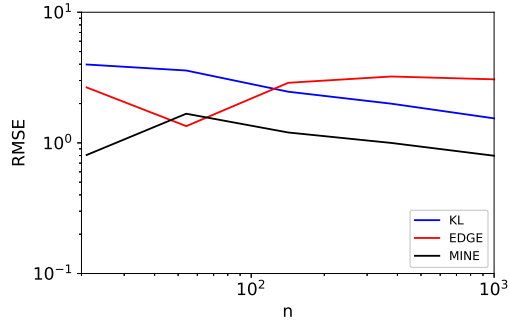}
\caption{Convergence of MI estimation (via Kozachenko-Leonenko, EDGE, and MINE estimators) versus the number of data samples $n$ for $d=10$ standard Gaussian vectors with 5 overlapping entries. Note that the convergence is significantly slower than that in the SMI estimation experiment from Figure \ref{fig:Conv}(b).}\label{fig:MI}
\end{figure}

\end{document}